\documentclass[preprint]{elsarticle}
\usepackage{amsmath}
\usepackage[utf8]{inputenc}
\usepackage{hyperref}
\usepackage{url}
\usepackage{listings}
\usepackage{verbatim}
\usepackage{graphicx}
\usepackage{amssymb}
\usepackage{amsthm}
\usepackage{xspace}
\usepackage{float}

\newcommand\subfig[2]{{\fig{#1}{#2}}}
\newcommand\subcap[1]{{(#1):}}
\newcommand{\vvactive}{\vvec}
\newcommand{\vactive}{v}
\newcommand{\static}{static\xspace}
\newcommand{\stalled}{stalled\xspace}
\newcommand{\nbr}{{\text{\~{\j}}}}
\newcommand{\thread}{\iota}
\newcommand{\target}{j}

\newcommand{\threada}{a}
\newcommand{\threadb}{b}
\newcommand{\distance}{\text{distance}}
\newcommand{\buffer}[1]{\ensuremath{\{#1 \}}}
\newcommand{\iteration}{I}

\newcommand{\Lref}{\ensuremath{\LCAL^{\text{ref}}}}

%
%

%
%
%
%




\newcommand{\SET}[1]{\{#1\}}

%
%

%
%
\newcommand{\eq}[1]{eq.~\eqref{#1}}

\newcommand{\fig}[1]{Fig.~\ref{#1}}
\newcommand{\quot}[1]{``#1''}

\newcommand{\sect}[1]{Section~\ref{#1}} 
\newcommand{\alg}[1]{Algorithm~\ref{#1}} 
\newcommand{\algg}[1]{Alg.~\ref{#1}} 
\newcommand{\algtwo}[2]{Algorithms~\ref{#1} and \ref{#2}} 
\newcommand{\alggtwo}[2]{Algs~\ref{#1} and \ref{#2}} 
 
\newcommand{\algthree}[3]{Algorithms~\ref{#1}, \ref{#2}, and \ref{#3}} 
 
\newcommand{\rem}[1]{Remark~\ref{#1}} 
\newcommand{\remtwo}[2]{Remarks~\ref{#1} and~\ref{#2}}

%
%

%
%

%
%
%
\newcommand{\ACAL}{\mathcal{A}}  
\newcommand{\CCAL}{\mathcal{C}}  
\newcommand{\GCAL}{\mathcal{G}}  
\newcommand{\LCAL}{\mathcal{L}}  
\newcommand{\OCAL}{\mathcal{O}}  
%
%

\newcommand{\tautilde}{\tilde{\tau}}

%
%

%
%

%
%

%
%
%
%
\newcommand{\expc}[1]{\exp \glc #1 \grc} 
%
%

%
%




\newcommand{\loga}[2][]{\log^{#1}\! \gla #2 \gra}  


\newcommand{\minb}[2][]{\min^{#1} \glb #2 \grb}  





%
%


%
%
\newcommand{\gla}{\,}  
\newcommand{\gra}{}  
\newcommand{\glb}{\left(}  
\newcommand{\grb}{\right)}  
\newcommand{\glc}{\left[}  
\newcommand{\grc}{\right]}  
\newcommand{\gle}{\left|}  
\newcommand{\gre}{\right|}  

\newcommand{\TO}{,\ldots,}
\newcommand{\VEC}[1]{\mathbf{#1}}
%
%

\newcommand{\vvec}{\VEC{v}}

\newcommand{\xvec}{\VEC{x}}

\newcommand{\ttilde}{\tilde{t}}


%
%

%
%

%
%
\newcommand\bigOb[1]{\ensuremath{\OCAL\glb #1 \grb}}

\newcommand\diff[1]{\mathrm{d}#1}

\DeclareMathOperator{\ran}{ran}

%
%
\newcommand{\fpn}[2]{\ensuremath{#1 \! \times \! 10^{#2}}}

\newcommand{\draftfigure}[4][\linewidth]{\begin{figure}[htbp]
   \begin{center}
      \includegraphics[width=#1]{#2}
   \end{center}
   \caption{#3}
   \label{#4}
\end{figure}}

\newcommand{\Gthree}[1][]{\ensuremath{\GCAL^{(3)}_{#1}}\xspace}
\newcommand{\Gmin}[1][]{\ensuremath{\GCAL^{\min}_{#1}}\xspace}
\newcommand{\arrow}[2]{\ensuremath{[#1\!\!  \to\! \!#2]}}
\newcommand{\lift}[1]{\ensuremath{ l_{#1}}}
\newcommand{\lifted}[2]{\ensuremath{ \SET{#1, #2}}}
\newcommand{\lifting}[5]{\ensuremath{\lift{#5}=(\arrow{#1}{#2},(#3,#4),#5) }}
\newcommand{\impact}[2]{$\langle #1,#2 \rangle$}
\newcommand{\xCPU}{x86 CPU\xspace}
\newcommand{\ACPU}{ARM CPU\xspace}
\newcommand{\xCPUs}{x86 CPUs\xspace}
\newcommand{\ACPUs}{ARM CPUs\xspace}
\newcommand{\kbirth}{k'}
\newcommand{\Nbirth}{N'}

\newtheorem{algo}{Algorithm}
\newtheorem{lemma}{Lemma}
\newtheorem{remark}{Remark}
\newtheorem*{availability}{Code availability}

\begin{document}
  \title{Multithreaded event-chain Monte Carlo with local times}
  \author[LPENS]{Botao Li}
  \author[UT,ISSP]{Synge Todo} 
  \author[ESPCI]{A. C. Maggs}
  \author[LPENS]{Werner~Krauth\corref{vvv}}
  \cortext[vvv]{Corresponding author, email address:
  \texttt{werner.krauth@ens.fr} }

\address[LPENS]{Laboratoire de Physique de l’Ecole normale supérieure, ENS, 
Université PSL, CNRS, Sorbonne Université, Université Paris-Diderot, Sorbonne 
Paris Cité, Paris, France}

\address[UT]{Department of Physics, University of Tokyo, 113-0033 Tokyo, Japan}
\address[ISSP]{Institute for Solid State Physics, University of Tokyo, 277-8581 
Kashiwa, Japan}

\address[ESPCI]{CNRS UMR7083, ESPCI Paris, PSL Research University, 10 rue
Vauquelin, 75005 Paris, France}

\date{\today}

\begin{abstract}
We present a multithreaded event-chain Monte Carlo algorithm (ECMC) for hard 
spheres. Threads synchronize at infrequent breakpoints and otherwise scan for 
local horizon violations. Using a mapping onto absorbing Markov chains, we 
rigorously prove the correctness of  a sequential-consistency implementation for 
small test suites.   On x86 and ARM processors,  a C++ (OpenMP) implementation 
that uses compare-and-swap primitives for data access achieves considerable 
speed-up with respect to  single-threaded code. The generalized birthday problem 
suggests that for the number of threads scaling as the square root of the number 
of spheres, the horizon-violation probability remains small for a fixed 
simulation time. We provide C++ and Python open-source code that
reproduces all our results. 
\end{abstract}

\maketitle
\section{Introduction}
Event-chain Monte Carlo (ECMC)~\cite{Bernard2009,Michel2014JCP} is an 
event-driven realization of a continuous-time irreversible Markov chain that has 
found applications in statistical physics~\cite{Bernard2011,Kapfer2015PRL} and 
related fields~\cite{Hasenbusch_2018}. Initially restricted to hard spheres and 
to models with piece-wise constant pair potentials~\cite{Bernard2012b}, ECMC was 
subsequently extended to continuous potentials, such as spin models and all-atom 
particle systems  with long-range interactions~\cite{KapferKrauth2017, 
Faulkner2018}. Potentials need not be pair-wise additive~\cite{Harland2017}. In 
opposition to standard Monte Carlo methods, such as the Metropolis 
algorithm~\cite{Metropolis1953}, ECMC does not evaluate the potential $U(\xvec)$ 
of a configuration $\xvec$ (nor any ratio of potentials) in order to sample the 
Boltzmann distribution $\pi(\xvec) = \expc{- \beta U(\xvec)}$, with inverse 
temperature $\beta$.

For hard spheres, ECMC is a special case of event-driven molecular 
dynamics~\cite{Alder1957,AlderWainwright1959}. In molecular dynamics, usually 
all $N$ spheres have non-zero velocities, and the number of candidate collision 
events at any time is \bigOb{N}. A central scheduler, efficiently implemented 
through a heap data structure, yields the next collision with computational 
effort \bigOb{1}, and it updates the heap in at most \bigOb{\loga{N}} 
operations~\cite{Rapaport1980,Isobe2016}. Event times are global, and the CPU 
clock advances together with the collision times. The global collision times and 
the required communication at each event complicate multithread
implementations~\cite{Lubachevsky1992,Lubachevsky1993, 
Greenberg1996,Krantz1996,Marin1997}. Domain decomposition, another strategy to 
cope with synchronization, is also problematic~\cite{Miller2004}.

In hard-sphere ECMC, a set $ \ACAL_t$ of $k<N$ \quot{active} spheres (all of 
radius $\sigma$) have the same non-zero velocity $\vvactive$ that changes 
infrequently. All other spheres are \quot{static}. At a 
lifting~\cite{Diaconis2000} \lifting{i}{j}{\xvec}{\xvec'}{t}, an active sphere 
$i$ collides at time $t$ with a target sphere $j$, at contact $|\xvec' - \xvec| 
= 2 \sigma$ (a condition that must be adapted for periodic boundary conditions). 
The lifting $\lift{t}$  connects an in-state (the configuration just before time 
$t$, at time $t^-$) with an out-state (the configuration just after time $t$, at 
time $t^+$):
\begin{equation}
\text{in-state}:
\glc
\begin{aligned}
 i \in \ACAL_{t^-}, &\quad j \not \in \ACAL_{t^-}\\
 \xvec_i(t^-) &=\xvec \\
 \xvec_j(t^-) &=\xvec' \\
 \vvec_i(t^-) &= \vvactive \\
 \vvec_j(t^-) &= 0 \\
\end{aligned}
\grc; 
\quad
\text{out-state}:  \gle
\begin{aligned}
 i \not \in \ACAL_{t^+}, &\quad j \in \ACAL_{t^+}\\
 \xvec_i(t^+) &=\xvec \\
 \xvec_j(t^+) &=\xvec' \\
 \vvec_i(t^+) &= 0 \\
 \vvec_j(t^+) &= \vvactive \\
\end{aligned}\gre.
\label{equ:CollisionRules}
\end{equation}

We consider in this paper two-dimensional spheres in a square box with periodic 
boundary conditions. In this system, the  direction of  $\vvactive$ must be 
changed at certain breakpoints for the algorithm to be 
irreducible~\cite{Levin2008}. However, we restrict our attention to ECMC in 
between two such breakpoints $h$ and $h'$  with, for concreteness, $\vvactive = 
(\vactive_x,\vactive_y)= (1,0)$. For a generic \quot{lifted}~\cite{Diaconis2000} 
initial configuration \lifted{\CCAL_h}{\ACAL_h} at $h$, ECMC is deterministic up 
to $h'$. Generically no two liftings take place at the same time $t$, so that 
they can be identified by their time.

Our multithreaded ECMC algorithm propagates $k = |\ACAL|$ active  spheres in 
independent threads, with shared memory. In between $h$ and $h'$, it only uses 
local-time attributes of each sphere. At a lifting 
\lifting{i}{j}{\xvec}{\xvec'}{t_i}, $j$ synchronizes with $i$ (the local time 
$t_j$ is set equal to $t_i$). For a sphere $i$ to move, it must not violate 
certain horizon conditions of nearby spheres $j$. In the absence of horizon 
violations between $h$ and $h'$, multithreaded ECMC will  be proven equivalent 
to the global-time process. 

The motivation for our work is twofold. First, we strive to speed up current 
hard-sphere simulations where, typically, $N \sim \fpn{1}{6}$. These simulations 
require weeks or month of run time to decorrelate from the initial 
configuration~\cite{Bernard2011,Engel2013}. Using a connection to the 
generalized birthday problem in mathematics, we will argue that such simulations 
can successfully run with  $k \lesssim \sqrt{N}$. Our approach to multithreading 
thus uses the freedom  to tune the number of active spheres. Second, by 
providing proof of concept for multithreaded ECMC algorithms, we hope to 
motivate the development of parallel ECMC algorithms for other system where 
sequential ECMC applies already.

The multithreaded ECMC algorithm is presented in two versions. One 
implementation uses the sequential-consistency model~\cite{Lamport1979}. Mapped 
onto an absorbing Markov chain, its correctness  is rigorously proven for small 
test suites. The C++ implementation uses OpenMP to map active chains onto 
hardware threads, together with atomic primitives~\cite{atomic} for fine-grained 
control of interactions between threads. Considerable speed-up with respect to a 
single-threaded version is achieved. The few simultaneously moving spheres ($k 
\ll N$) avoid communication bottlenecks between threads, even though each 
hard-sphere lifting involves only little computation.

Subtle aspects of our algorithm surface through the confrontation of the C++ 
implementation with the sequential-consistency computational model on the same 
test suites. By reordering single statements in the code, we may for example 
introduce rare bugs that are not detected during random testing, but are 
readily exhibited in the rigorous solution, and that illustrate difficulties 
stemming from possible compiler or processor re-ordering.

\begin{availability}
Cell-based ECMC for two-dimensional hard spheres is implemented (in Fortran90) 
as \verb#CellECMC.f90#. Our version is slightly modified from the original code 
written by E.~P.~Bernard (see Acknowledgements). 
The code prepares initial configurations, and it is used in 
validation scripts.
\end{availability}

\section{Algorithms: from global-time processes to multithreaded ECMC}
\label{sec:Algorithms}

In this section, we start with the definition of a continuous-time process, 
\alg{alg:GlobalContinuous}, that is manifestly equivalent to molecular dynamics 
with the collision rules of \eq{equ:CollisionRules}. Its event-driven version, 
\alg{alg:GlobalECMC}, provides the reference set \Lref of liftings used in our 
validation scripts (see \sect{sec:Implementation}). The single-threaded 
\alg{alg:LocalContinuous} relies on local times. It has correct output if no 
horizon violation takes place. Its event-driven version, \alg{alg:LocalECMC}, 
yields a practical method that can be implemented and tested. 
\algtwo{alg:MultiSeqConsist}{alg:MultiCPP} realize multithreaded ECMC, the 
latter in a highly efficient C++ implementation.

\subsection{Continuous processes and ECMC with global time}
\label{sec:GlobalContinuousECMC}

\begin{algo}[Continuous process with global time]
At global time $t=h$, an initial lifted configuration 
\lifted{\CCAL_h}{\ACAL_{h}} is given ($\vvec_i = \vvactive=(1,0)\ \forall i \in 
\ACAL_h$ and $\vvec_i = 0\ \forall i \not \in \ACAL_h$). All spheres $i$ carry 
local times $t_i$, with, initially, $t_i(h) = h\ \forall i $. For active spheres 
($i \in \ACAL_{t}$), $\diff t_i/ \diff t = 1$. At a lifting 
\lifting{i}{j}{\xvec}{\xvec'}{t} the local time of sphere $j$ is updated as 
$t_j(t^+) = t$ and, furthermore,  $\ACAL_{t^+} =  \ACAL_{t^-} \setminus \SET{i} 
\cup \SET{j}$. The algorithm stops at global time $t=h'$, and outputs the lifted 
configuration \lifted{\CCAL_{h'}}{\ACAL_{h'}}, and the set $\LCAL_{h'} = 
\SET{l_t:  h < t < h'}$ of liftings that have taken place between $h$ and $h'$. 
\label{alg:GlobalContinuous}
\end{algo}

\begin{remark}[Meaning of local times]
In \algg{alg:GlobalContinuous}, the local time $t_i(t)$ is a function of the 
global time $t$. It gives the global time at which sphere $i$ was last 
active (or $t_i(t) = h$ if $i$ was not active for $[h, t ]$). Therefore $t_i 
(t) = t \ \forall i \in \ACAL_t$ and $t_i (t) < t \ \forall i \not \in \ACAL_t$.
\end{remark}

\begin{remark}[Positivity of local-time updates]
In \algg{alg:GlobalContinuous}, at any lifting $l_t$, the update of $t_j$ is 
positive: $t_j(t^+) - t_j(t^-) > 0$.
\label{rem:PositiveUpdate}
\end{remark}

\begin{remark}[Time-reversal invariance]
\algg{alg:GlobalContinuous} is deterministic and time-reversal invariant: If an 
initial  lifted configuration \lifted{\CCAL_{h}}{\ACAL_{h}} generates the  final 
lifted configuration \lifted{\CCAL_{h'}}{\ACAL_{h'}} with $\vvactive$, then the 
latter will reproduce the initial configuration with  $-\vvactive$. The set 
$\LCAL$ of liftings is the same in both cases.
\label{rem:TimeReversalInvariance}
\end{remark}

\noindent
In order to converge towards a given probability distribution, Markov-chain 
algorithms must satisfy the global-balance condition. It states that the 
probability flow into a configuration $\CCAL$ (summed over all liftings $\ACAL$) 
must equal the probability flow out of it~\cite{Levin2008}. ECMC balances these 
flows for each lifting individually (for the uniform probability distribution).

\begin{lemma}
\algg{alg:GlobalContinuous} satisfies the global-balance condition for any 
lifted configuration \lifted{\CCAL}{\ACAL}. All lifted configurations 
accessible from a given initial configuration thus have the same statistical 
weight.
\label{lem:GlobalContinuous}
\end{lemma}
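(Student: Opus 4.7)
The plan is to treat \algg{alg:GlobalContinuous} as a deterministic flow on the space of lifted configurations \lifted{\CCAL}{\ACAL} (non-overlapping hard spheres equipped with an active set) and to verify that the uniform measure on this space is invariant, which is the global-balance condition for the target uniform distribution. The dynamics naturally decomposes into two pieces: free flight between successive liftings, and the instantaneous lifting itself on the contact manifold. I would handle these pieces separately.

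For the free-flight piece, each sphere $i \in \ACAL$ translates at constant velocity $\vvactive$ while all inactive coordinates and the set $\ACAL$ stay frozen. This is a rigid translation of unit Jacobian on the interior of the non-overlap region, so Liouville's theorem immediately preserves the uniform density there. The non-trivial piece is the contact manifold $\glb |\xvec_i - \xvec_j| = 2 \sigma \grb$, where a lifting \lifting{i}{j}{\xvec}{\xvec'}{t} relabels $\ACAL \to \ACAL \setminus \SET{i} \cup \SET{j}$. Here I would compute the probability current: with outward normal $\nvec = (\xvec' - \xvec)/(2\sigma)$, the inflowing current density on the side where $i$ is active is $\rho\,(\vvactive \multcc \nvec)$. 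After relabeling, the configuration sits on the conjugate side of the manifold with $j$ moving at the same velocity $\vvactive$, and because the hard-sphere contact condition is symmetric under $i \leftrightarrow j$ (equal radii $\sigma$), the outflowing current is identical, $\rho\,(\vvactive \multcc \nvec)$. The lifting rule thus threads trajectories across the manifold without any jump, so the uniform measure is also preserved across liftings.

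Combining the two preservation statements shows that the uniform measure is stationary under the continuous-time map, which is global balance at every \lifted{\CCAL}{\ACAL}. Since the algorithm is deterministic and, by \rem{rem:TimeReversalInvariance}, invertible, its orbits partition the lifted state space, and on each orbit the invariant uniform measure assigns equal weight to every reachable configuration, giving the second claim. I expect the main obstacle to lie in handling $|\ACAL| = k > 1$: one must argue that a lifting involving the pair $(i,j)$ does not couple to the remaining $k-1$ active spheres. Since, generically, no two liftings coincide in global time, the remaining active spheres continue in undisturbed free flight through the event, and the multi-active case reduces to a sequence of independent pairwise collisions, each governed by the single-pair flux-matching computation above.
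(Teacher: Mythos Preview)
Your argument is correct and is essentially a fleshed-out version of the paper's first sentence, which simply invokes the equivalence of \algg{alg:GlobalContinuous} with momentum- and energy-conserving molecular dynamics (and hence Liouville's theorem). You make this explicit by separating free flight from the contact manifold and checking flux matching at a lifting, whereas the paper leaves that computation implicit. The paper, however, also sketches a genuinely different second route that you do not use: it discretizes the dynamics on a rectangular grid aligned with $\vvactive$ and notes that every lifted configuration \lifted{\CCAL}{\ACAL} has a \emph{unique predecessor}, so the induced map is a bijection and the flow into each state equals one---global balance for the uniform distribution follows immediately. Your continuous flux-matching calculation is more transparent about the geometry of the collision rule and handles the $k>1$ case cleanly; the paper's unique-predecessor argument is terser and bypasses the manifold analysis altogether, at the price of an (unspecified) limiting procedure from the grid back to the continuum.
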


\begin{proof}
The algorithm is equivalent to molecular dynamics that conserves one-dimensional 
momenta as well as the energy. The claimed property follows for 
\algg{alg:GlobalContinuous} because it is satisfied by molecular dynamics. The 
property can be shown directly for a discretized version of 
\algg{alg:GlobalContinuous} on a rectangular grid aligned with $\vvactive$ with 
infinitesimal cell size such that each lifted configuration 
\lifted{\CCAL}{\ACAL} has a unique predecessor. The flow into each lifted 
configuration equals one. This is equivalent to global balance for the uniform 
probability distribution.
\end{proof}

\noindent The event-driven version of 
\alg{alg:GlobalContinuous} is the following:
\begin{algo}[ECMC with global time]
With input as in \algg{alg:GlobalContinuous}, in each iteration 
$\iteration=1,2,\dots$, the next global lifting time is computed as 
$t_{\iteration + 1} = t_{\iteration} + \min_{i \in \ACAL, j \not \in \ACAL} 
\tau_{ij}$,\footnote{ $\tau_{ij}$ is infinite if $i$ cannot lift with $j$ for 
the given  initial configuration $\CCAL$ and velocity $\vvactive$. The presence 
of an arrow \arrow{i}{j} in the directed constraint graph $\GCAL$ indicates that 
$\tau_{ij}$ can be finite (see \sect{sec:ConstraintGraph}).} where $\tau_{ij}$ 
is the time of flight from sphere $i$ to sphere $j$. At time $t_\iteration$, 
local times and positions of active spheres are advanced to $\ttilde = 
\minb{t_{\iteration + 1}, h'}$, and to $\xvec_i(t_{\iteration + 1}) = 
\xvec_i(t_\iteration) + (\ttilde - t_\iteration) \vvactive$, respectively. If 
$\ttilde = t_{\iteration+1}$ (a lifting $l_{\iteration + 1} = 
(\SET{i,j},\SET{\xvec, \xvec'}, t_{\iteration + 1})$ takes place), the set of 
active spheres is updated as $\ACAL_{\iteration + 1} = \ACAL_{\iteration} 
\setminus \SET{i} \cup \SET{j}$. Otherwise $\ttilde = h'$, and the algorithm 
stops. Output is as in \algg{alg:GlobalContinuous}.
\label{alg:GlobalECMC}
\end{algo}
\begin{availability}
\algg{alg:GlobalECMC} is implemented in 
\verb#GlobalTimeECMC.py# and invoked in several validation scripts, for which 
it  
generates the reference lifting sets \Lref.
\end{availability}

\noindent

\subsection{Single-threaded processes and ECMC with local times}
\label{sec:LocalContinuousECMC}

\alg{alg:LocalContinuous}, that we now describe, is a single-threaded emulation  
of our multithreaded \algtwo{alg:MultiSeqConsist}{alg:MultiCPP}. A randomly 
sampled active chain $\thread \in \{1,2,\dots\}$ advances (in what corresponds 
to a thread) for an imposed duration, at most until its local time reaches $h'$. 
 On thread $\thread$, the active sphere $i$ must remain above the horizons of 
its neighboring spheres $j$ (see \subfig{fig:HorizonConditionForward}{a}). The 
horizon condition is \begin{equation} t_i + \tau_{ij} > t_j, 
\label{equ:horizon_condition} \end{equation} where the time of flight is 
$\tau_{ij} = x_j - x_i + b_{ij}$, with $b_{ij}$ the contact separation parallel 
to $\vvactive$ between spheres $i$ and $j$. The horizon condition must be 
checked for at most three spheres $j$ for a given $i$ because  all other spheres 
are either too far for lifting with $i$ in the direction perpendicular to 
$\vvactive$ or are 
prevented from lifting with $i$ by other spheres (see 
\sect{sec:ConstraintGraph}). The algorithm aborts if a horizon violation is 
encountered. The active chain $\thread$ stops if a lifting would be to a sphere 
$j$ that is itself active. The active chain $\thread + 1$ is then started.

\begin{remark}[Double role of horizon condition]
The horizon condition of \eq{equ:horizon_condition} has two roles. First, it is 
a necessary condition for a lifting of $i$ with $j$ (if it effectively takes 
place) to produce the required positive local-time update of $t_j$ at the 
lifting time $t$ (see \rem{rem:PositiveUpdate} and 
\subfig{fig:HorizonConditionForward}{a}). Second, it is a sufficient 
non-crossing condition for any sphere $k$, ensuring that $k$ was not at a 
previous local time in conflict with $i$ (see 
\subfig{fig:HorizonConditionForward}{b}).
\label{rem:HorizonDouble}
\end{remark}

\noindent
It is for the second role discussed in \rem{rem:HorizonDouble} that the horizon 
condition is checked for all neighboring spheres $j$ of an active sphere $i$. 

\begin{remark}[False alarms from horizon condition] The horizon condition may 
lead to false alarms (see \subfig{fig:HorizonConditionForward}{b}), which could 
be avoided through the use of the non-crossing condition. The latter is more 
difficult to check, as it requires the history of past liftings. Our algorithms 
only implement the horizon condition.
\label{rem:FalseAlarms}
\end{remark}

\draftfigure[9cm]{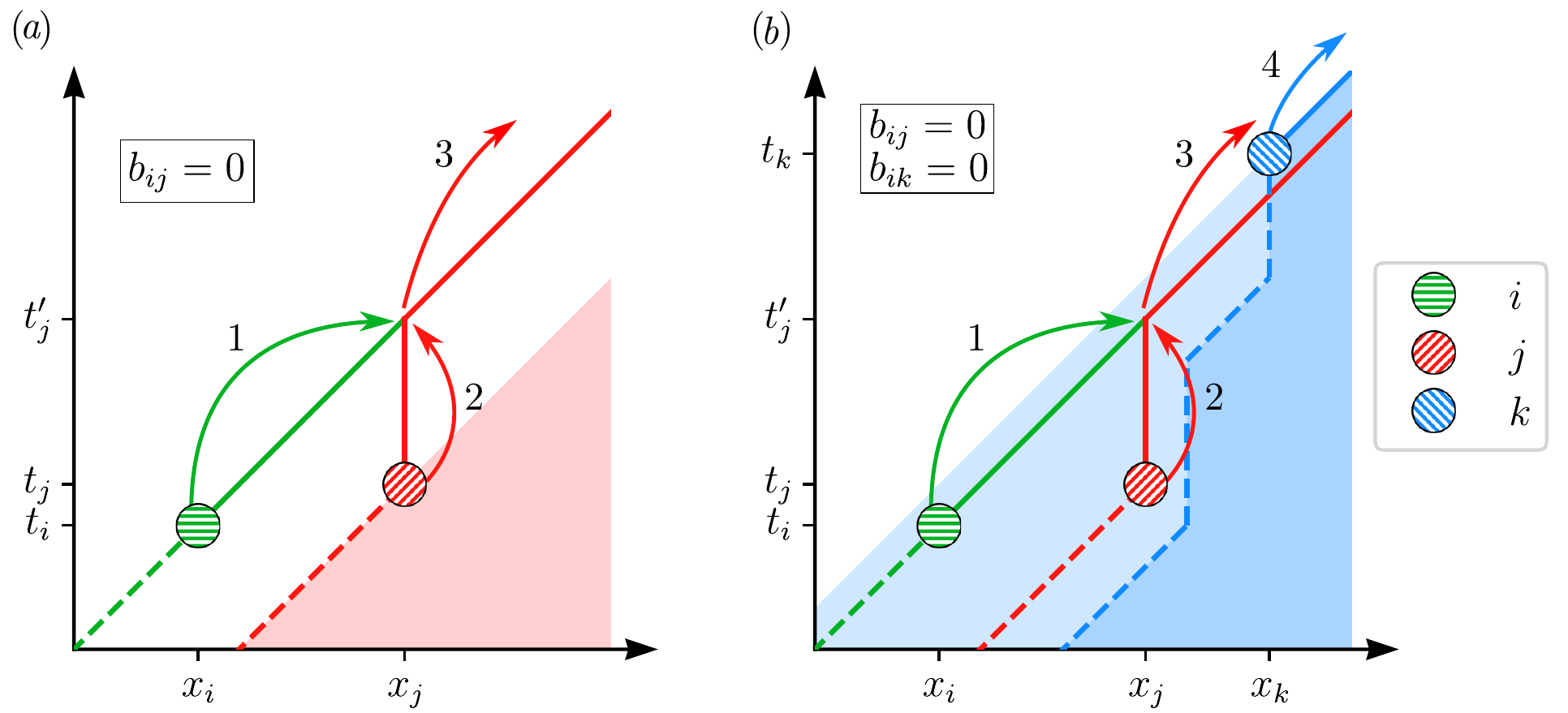}
{Horizon condition and non-crossing condition in local-time algorithms. 
\subcap{a} Sphere $i$ is above the horizon of sphere $j$ (shaded area), so that 
at the lifting of $i$ with $j$, the local-time update of $t_j$ is positive. 
\subcap{b} Sphere $i$ does not lift with $k$. It does not cross the past 
trajectory of $k$, although it violates the horizon condition with $k$ (light 
 shading) (supposing $b_{ik} = 0$). The lifting of $i$ with $j$ could in 
principle be allowed under the non-crossing condition with $k$ (dark blue 
shading), supposing $\tau_{jk} = \infty$.}
{fig:HorizonConditionForward}

\begin{algo}[Single-threaded continuous process with local times] 
With input as in \algg{alg:GlobalContinuous},  active chain $\thread=1,2, \dots$ 
is initialized (sequentially) with an active sphere $i$,  sampled from 
$\tilde{\ACAL} = \SET{i \in \ACAL, t_i \ne h'} $, and for a local-time interval 
$\tau^{\max}_\thread = \minb{\ran, h' - t_i}$, where $\ran$ is a positive random 
number. In active chain $\thread$, the active sphere $i$ moves with velocity 
$\vvactive$ for $\tau \in [0, \tau^{\max}_\thread]$ and $\diff t_i / \diff \tau 
= 1$, if the horizon condition of \eq{equ:horizon_condition} is satisfied for 
all spheres $j$. (In case of a horizon violation, the algorithm aborts.) If a 
lifting  \lifting{i}{j}{\xvec}{\xvec'}{t_i} concerns an active sphere $j$, the 
active chain $\thread$ stops with $i$ at $\xvec$. Otherwise, the local time of 
sphere $j$ is updated as $t_j(t_i^+) = t_i$ and $\ACAL =  \ACAL \setminus 
\SET{i} \cup \SET{j}$, with the active chain $\thread$ now moving $j$. The 
algorithm terminates if $\tilde{\ACAL} = \emptyset$, that is, if all the active 
spheres are stalled. Output is as for \algg{alg:GlobalContinuous}.
\label{alg:LocalContinuous}
\end{algo}
\begin{remark}[Stalled spheres]
\quot{Stalled} spheres (active spheres $i$ with $t_i = h'$) make up the set 
$\ACAL \setminus \tilde{\ACAL}$. Considering stalled spheres separately 
simplifies the sampling of $\tilde{\ACAL}$, and the restart from $h'$ for the 
next leg of the ECMC run.
\end{remark}

\begin{lemma}
If \algg{alg:LocalContinuous} terminates without a horizon violation, 
its output is  identical to that of \algg{alg:GlobalContinuous}. 
\label{lem:LocalContinuous}
\end{lemma}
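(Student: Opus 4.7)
The plan is to prove that the set of liftings produced by Algorithm \ref{alg:LocalContinuous} coincides with the set $\LCAL_{h'}$ produced by Algorithm \ref{alg:GlobalContinuous}, by induction on liftings ordered by event time; the equality of the terminal lifted configurations then follows automatically, since each configuration is determined by the initial one together with the applied set of liftings.

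First I would make precise the bridge between the two algorithms. At every moment during Algorithm \ref{alg:LocalContinuous}, the local time $t_i$ of each sphere equals the global time of its last activity in the corresponding global-time evolution: non-active spheres do not move, and active spheres advance $t_i$ at unit rate (cf.\ the meaning-of-local-times remark). Consequently, every emitted lifting $\lifting{i}{j}{\xvec}{\xvec'}{t_i}$ carries the same event time as its global-time counterpart, and the positions $\xvec,\xvec'$ coincide with those produced by Algorithm \ref{alg:GlobalContinuous} at that common time.

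Next I would run the induction. Assume the first $n$ time-ordered liftings coincide in the two algorithms, and let $l_{n+1}$ be the next global lifting, occurring at time $t_{n+1}$ between an active sphere $a$ and a static sphere $b$. Because Algorithm \ref{alg:LocalContinuous} halts only when $\tilde{\ACAL}=\emptyset$ and otherwise draws chains of positive duration, some sequence of chains must eventually carry $a$'s local time up to $t_{n+1}$. At that instant the time of flight $\tau_{ab}$ vanishes and $l_{n+1}$ is emitted; along the advance, the horizon conditions $t_a + \tau_{ak} > t_k$ against every other neighbor $k$, which by assumption hold, rule out spurious earlier liftings through their non-crossing role (\rem{rem:HorizonDouble}). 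Conversely, \rem{rem:PositiveUpdate} together with the horizon check ensures that any lifting ever produced locally is permitted by the global dynamics at the same global time, so no extraneous liftings appear. Termination only when all active spheres are stalled at $h'$ then precludes missed global liftings in $(h,h')$.

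The main obstacle lies in the asynchrony of the local scheduling: while one chain advances $a$, the other active spheres sit at strictly smaller local times and, in the canonical global schedule, could have participated in intervening events first. The crux of the argument is that the horizon condition, enforced against every neighbor of every advancing sphere and at every attempted lifting, prevents $a$ from crossing any trajectory that in the global order would be realized before $t_{n+1}$. This is exactly what turns the out-of-order, chain-by-chain replay into a faithful reconstruction of the canonically ordered global lifting sequence.
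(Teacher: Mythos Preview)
Your approach is genuinely different from the paper's: you attempt a forward induction on the time-ordered liftings, whereas the paper argues backward from the terminal state using time-reversal invariance (\rem{rem:TimeReversalInvariance}). The forward route is natural to try, but as written it has a gap at the inductive step.

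The gap is in the sentence ``the horizon conditions \ldots\ rule out spurious earlier liftings through their non-crossing role.'' The check $t_i+\tau_{ij}>t_j$ guards only against a neighbor $j$ whose local time is \emph{ahead} of the would-be lifting time. It does not guard against $j$ being \emph{behind}: if, in the global schedule, $j$ should already have been activated and displaced by some other chain that the local run has not yet advanced, then $j$ sits at a stale position with a small $t_j$, the horizon check passes, and $i$ produces a lifting with $j$ that is not in $\LCAL^{\text{ref}}$. The inconsistency is detected only later, when the lagging chain finally reaches $j$ and finds $t_j$ too large---that is what triggers the abort. So under the no-abort hypothesis spurious liftings are indeed excluded, but not by the local check you invoke at step $n+1$; your induction, as formulated, does not close because the hypothesis ``the first $n$ time-ordered liftings coincide'' gives you no control over spheres whose chains have already been run past $t_{n+1}$ or not yet run at all.

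The paper sidesteps this by working backward. At termination all active spheres share local time $h'$, a synchronized state. One then runs \algg{alg:GlobalContinuous} with $-\vvactive$ from $h'$ to the last recorded lifting time $h''$ and shows that no static sphere can obstruct the swept region: such a sphere would need $t_j\le h''$ (since $h''$ is the last lifting) and simultaneously $t_j>h''$ (from the forward horizon check that, by hypothesis, passed). This contradiction is where the no-abort assumption is actually consumed. The last lifting can then be undone, and the argument iterates down to $h$; time-reversal identifies the reconstructed run with \algg{alg:GlobalContinuous}. If you want to rescue a forward argument, you would need a stronger invariant---something like ``at every step of the local execution, each sphere's position equals its global-time position at its current local time''---and you would have to show directly that this invariant cannot first fail without a later horizon violation, which is essentially the backward argument in disguise.
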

\begin{proof}
We consider the final lifted configuration \lifted{\CCAL_{h'}}{ \ACAL_{h'}} of a 
run that has terminated without a horizon violation, and that has preserved a 
log of all local-time updates. The termination condition is $\tilde{\ACAL} = 
\emptyset$, so that all active spheres are stalled, with local time $h'$. We 
further consider the final lifting $l_{h''}$ in $\LCAL_{h'}$ (so that $t < h''\ 
\forall\ l_t \in \LCAL_{h''}$). Local times  of static spheres satisfy $t_i \le 
h''\ \forall i \not \in \ACAL_{h'}$. When backtracking, using 
\algg{alg:GlobalContinuous} with $- \vvactive$, from $h'$ to $h''^+$, no lifting 
takes place among active spheres (see \subfig{fig:HorizonCondition2}{a}). The 
area swept out by the active spheres cannot overlap with a static sphere $j$   
because it must have $t_j < h''$ (local times are smaller than the last lifting) 
and, on the other hand, $t_j > h''$, because of \eq{equ:horizon_condition} (see 
\subfig{fig:HorizonCondition2}{b}). The lifting 
\lifting{i}{j}{\xvec}{\xvec'}{h''} can now be undone. (From $j \in \ACAL_{h'}$ 
and $i \not \in \ACAL_{h'}$, we obtain $\ACAL_{h''} = \ACAL_{h'} \setminus 
\SET{j} \cup \SET{i}$. The updated local time $t_j(h''^-)$ can be reconstructed 
from the log. It is smaller than $h''$. The lifting is then itself eliminated: 
$\LCAL_{h''} =  \LCAL_{h'}  \setminus \SET{l_{h''}}$.) All active spheres at 
$h''^-$ now have local time $h''$. Similary, all liftings can be undone, 
effectively running \algg{alg:GlobalECMC} with $-\vvactive$ from $h'$ to $h$. As 
\algg{alg:GlobalContinuous} is time-inversion invariant, the local times at its 
liftings are the same as those  of \algg{alg:LocalContinuous}.
\end{proof}

\draftfigure[9cm]{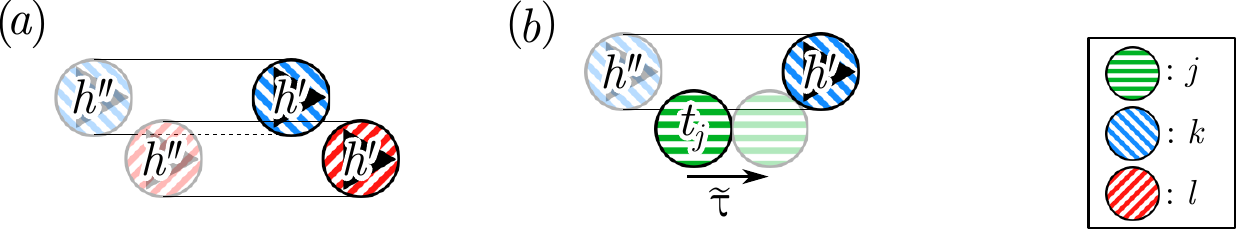}
{Backtrack using \algg{alg:GlobalContinuous} from the final configuration of 
\algg{alg:LocalContinuous}. \subcap{a} Active spheres $k$ and $l$ do not 
lift among each other. \subcap{b} A static sphere $j$ crossing the trajectory 
of active sphere $k$. This crossing is impossible because of the horizon 
condition ($\tautilde < h'- h''$ leading to $t_j > h''$, in contradiction with 
the condition $t_j < h''$).
}{fig:HorizonCondition2}

For concreteness, in the following event-driven formulation of 
\alg{alg:LocalContinuous}, the local-time interval $\tau_\thread^{\max}$ of 
an active chain $\thread$ is chosen equal to the time of flight towards the 
next lifting. 

\begin{algo}[Single-threaded ECMC with local times]
With input as in \algg{alg:GlobalContinuous}, for each (sequential) active chain 
$\thread=1,2,\dots$, an active sphere $i$  is sampled from $\tilde{\ACAL} = 
\SET{i \in \ACAL, t_i \ne h'}$. The horizon conditions of 
\eq{equ:horizon_condition} are checked for all\footnote{at most three spheres 
$j$ can have finite $\tau_{ij}$ for any $i$, see \sect{sec:ConstraintGraph}} 
spheres $j$ that can have finite time of flight $\tau_{ij}$. The algorithm 
aborts if a violation occurs. Otherwise, $i$ is moved forward to $\minb{t_i + 
\tau_{ij}, h'}$, and the local time of $i$ and $j$ are updated to that time. The 
active chain stops if $j$ is an active sphere or if the local time equals $h'$. 
Otherwise, the move corresponds to a lifting \lifting{i}{j}{\xvec}{\xvec'}{t_i + 
\tau_{ij}} and $\ACAL = \ACAL \setminus \SET{i} \cup \SET{j}$, with the active 
chain now moving $j$. The algorithm terminates if $\tilde{\ACAL} = \emptyset$. 
Output is as for \algg{alg:GlobalContinuous}.
\label{alg:LocalECMC}
\end{algo}
\begin{availability}
\algg{alg:LocalECMC} is implemented in 
\verb#SingleThreadLocalTimeECMC.py# and tested in the
\verb#PValidateECMC.sh# script.
\end{availability}

\begin{remark}[Partial validation]
In \sect{sec:Validation}, a variant of \algg{alg:LocalECMC} is used to validate 
part of a run, even if it does not terminate correctly. When a sphere $i$ 
detects a horizon violation, its time $t_i$ is recorded. At $h'$, the 
set 
$\LCAL_{t^*}$ of all liftings up to the earliest horizon violation, at $t^*$, 
agrees with the corresponding partial list of liftings for 
\algg{alg:GlobalContinuous}.
\label{rem:PartialValidation}
\end{remark}

\subsection{Multithreaded ECMC (sequential-consistency model)}
\label{sec:MultiSeqConsist}

\alg{alg:MultiSeqConsist}, the subject of the present section, is a model 
shared-memory ECMC on $k$ threads, that is, on as many threads as there are 
active spheres. The algorithm adopts the sequential-consistency 
model~\cite{Lamport1979}. We rigorously prove its correctness for small 
test suites by mapping the multithreading stage of this algorithm to an 
absorbing 
Markov chain.  The algorithm allows us to show that certain seemingly innocuous 
modifications of \alg{alg:MultiCPP} (the C++ implementation) contain bugs that 
are too rare to be detected by routine testing. 
\draftfigure[\columnwidth]{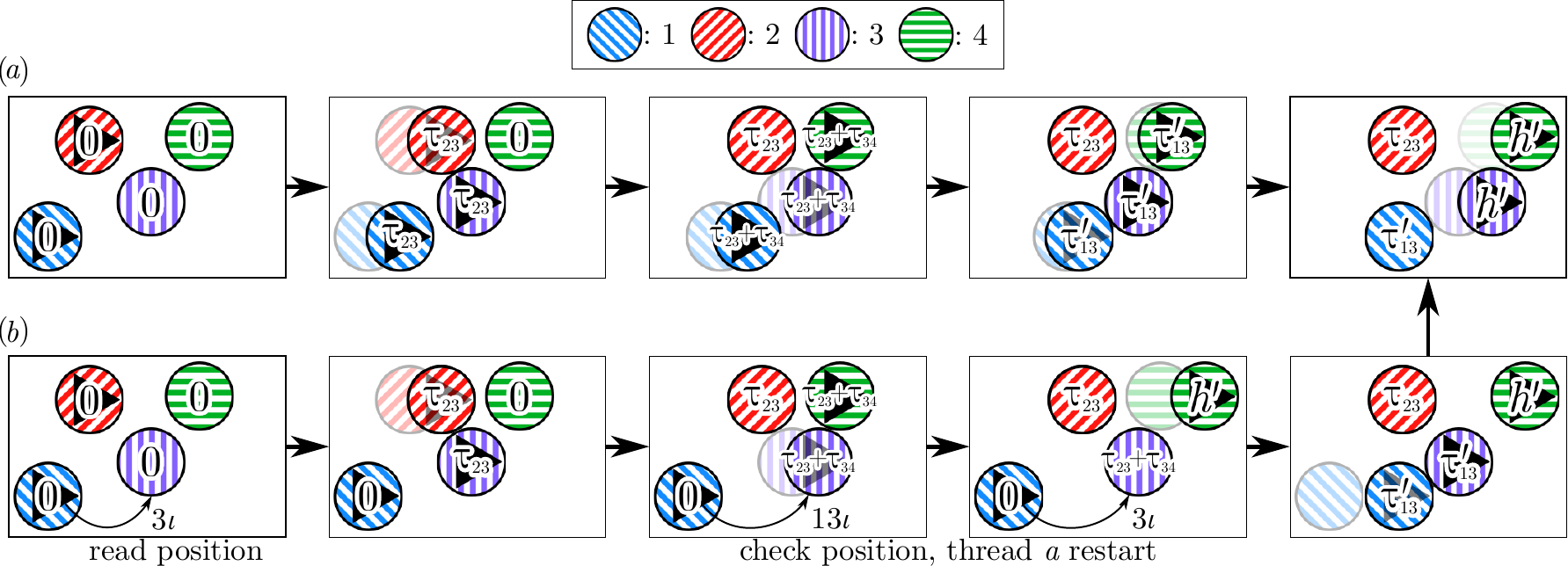} 
{\algg{alg:MultiSeqConsist}, as applied to the \texttt{SequentialC4} test 
suite. 
\subcap{a} Reference set \Lref from \algg{alg:GlobalECMC}. 
\subcap{b} Run of \algg{alg:MultiSeqConsist} involving a \quot{lock-less} lock 
rejection on the position of sphere $3$.} {fig:LocalGlobalComparison} 

The algorithm has three stages. In the (sequential) initialization
stage, 
it inputs a lifted initial configuration and maps 
each active sphere to a thread. This is followed by the multithreading stage, 
where each active chain progresses independently,  checking the horizon 
conditions in its local environment. The algorithm concludes with the 
(sequential) output stage. 

At each step of the multithreading stage of \alg{alg:MultiSeqConsist}, a switch 
randomly selects one of the $k$ statements (one for each thread $a,b, \dots$) 
contained in a buffer as $\buffer{\text{next}_a,\text{next}_b, \dots}$. The 
selected statement is executed on the corresponding thread, and then the buffer 
is updated. The random sequence of statements mimics the absence of 
thread synchronization except at breakpoints. All threads possess an 
absorbing {\bf wait} statement. When it is reached throughout, the algorithm 
progresses to the output stage, followed by successful termination. 
The program aborts when a thread detects a horizon violation. For our test 
suites, we prove by explicit construction that each state is connected to at 
least one of the absorbing states, but we lack a general proof of validity for 
arbitrary configurations and general $N$.

In \alg{alg:MultiSeqConsist}, each sphere has three attributes, namely a tag, a 
local time and a position. The sphere's tag indicates whether it is active on a 
thread $\thread$, stalled, or static. All threads have read/write 
access to the 
attributes of all spheres. A state of the Markov chain is constituted by the 
spheres with their attributes, some local variables and the buffer content.

\begin{algo}[Multithreaded ECMC (sequential-consistency model)]
At breakpoint $h=0$, a lifted initial configuration \lifted{\CCAL_h}{\ACAL_h} is 
input (see \fig{fig:LocalGlobalComparison} for the example with four spheres). 
All local times are set to $h=0$, all tags are put to static, except for the 
active spheres, whose tags correspond to their thread $\thread$.
The buffer is set to  \buffer{1_1,\ 1_\thread \TO 1_k}. A random switch 
selects one buffer element. The corresponding statement is executed on its 
thread, and the buffer is replenished. The following provides pseudo-code for 
the multithreading stage ($i_\thread$ is the active sphere,
$j_\thread$ the target sphere, and $distance_\thread$ the difference between  
$h'$ and the local time, all on thread $\thread$.): \\
\noindent
\begin{center}
\begin{tabular}{rl}
$1\thread$&$\tau_\thread \leftarrow \distance_\thread$; $j_\thread \leftarrow 
i_\thread$; $x_\thread \leftarrow \infty$\\
$2\thread$&{\bf for} $\nbr$ {\bf in} ${\text \{1,2,...,n\}}\setminus i_\thread$ 
{\bf 
:}\\
$3\thread$&\hspace{3mm}$x_\nbr \leftarrow \nbr.x$\\
$4\thread$&\hspace{3mm}$\tau_{i\nbr} \leftarrow x_\nbr - i_\thread.x - 
b_{i\nbr}$\\
$5\thread$&\hspace{3mm}{\bf if} $i_\thread.t\! + \!\tau_{i\nbr}\! < 
\!\nbr.t${\bf\ :} 
{\bf 
abort}\\ 
$6\thread$ &\hspace{3mm}{\bf if} $\tau_{i\nbr} < \tau_\thread${\bf\ :} \\ 
$7\thread$&\hspace{6mm}$j_\thread\leftarrow \nbr$ \\ 
$8\thread$&\hspace{6mm}$x_\thread\leftarrow x_{\nbr}$ \\
$9\thread$&\hspace{6mm}$\tau_\thread \leftarrow \tau_{i\nbr}$ \\
$10\thread$&$j_\thread.tag.$CAS$(\static,\thread)$\\
$11\thread$ &{\bf if} $j_\thread.tag = \thread${\bf\ :}\\
$12\thread$&\hspace{3mm}{\bf if} $\tau_\thread < \distance_\thread${\bf\ :}\\
$13\thread$&\hspace{6mm}{\bf if} $x_\thread = j_\thread.x${\bf\ :}\\
$14\thread$&\hspace{9mm}$j_\thread.t \leftarrow  i_\thread.t+\tau_\thread$\\
$15\thread$&\hspace{9mm}$i_\thread.t \leftarrow  i_\thread.t+\tau_\thread$\\
$16\thread$&\hspace{9mm}$i_\thread.x \leftarrow i_\thread.x + \tau_\thread$\\
$17\thread$&\hspace{9mm}$i_\thread.tag \leftarrow \static$\\
$18\thread$&\hspace{9mm}$\distance_\thread \leftarrow 
\distance_\thread-\tau_\thread$\\
$19\thread$&\hspace{9mm}$i_\thread \leftarrow j_\thread$\\
     &\hspace{6mm}{\bf else\ :}\\
$20\thread$&\hspace{9mm}$j_\thread.tag \leftarrow \static$ \\
     &\hspace{9mm}{\bf goto} $1$\\
&\hspace{3mm}{\bf else\ :}\\
$21\thread$&\hspace{6mm}$i_\thread.t \leftarrow i_\thread.t+\tau_\thread$\\
$22\thread$&\hspace{6mm}$i_\thread.x \leftarrow i_\thread.x + \tau_\thread$\\
$23\thread$&\hspace{6mm}$\distance_\thread \leftarrow 0$\\
$24\thread$&\hspace{6mm}$i_\thread.tag \leftarrow \stalled$\\
     &{\bf else\ :} {\bf goto} $1$\\
$25\thread$&{\bf if} $\distance_\thread > 0${\bf\ :} {\bf goto} $1$\\
$26\thread$&{\bf wait}\\  
\end{tabular} 
\end{center}
When all $k$ threads have reached their {\bf wait} statements, the algorithm 
proceeds to its output stage. Output is as for 
\algg{alg:GlobalContinuous}.
\label{alg:MultiSeqConsist}
\end{algo}
\begin{availability}
\verb#SequentialMultiThreadECMC.py#. implements 
\algg{alg:MultiSeqConsist}. It also constructs all states connected to the 
initial state and traces them  to the absorbing states. \end{availability}

\begin{remark}[Illustration of pseudocode]
The multithreading stage of \algg{alg:MultiSeqConsist} corresponds in $k$ 
independent programs running independently. In the sequential-consistency model, 
the space of programming statements is thus $k$-dimensional (one sequence 
$(1\thread \TO 26\thread)$ per thread), and each displacement in this space 
proceeds along a randomly chosen coordinate axis. As an example, if for a buffer 
\buffer{next_1 \TO next_\thread=20\thread \TO next_k} the switch selects thread 
$\thread$, then the tag of target particle $j_\thread$ is set to \quot{static}, 
and the buffer is updated to \buffer{next_1 \TO next_\thread=1\thread \TO 
next_k}. The thread $\thread$ will thus be restarted at its next selection.
\end{remark}

\noindent
The compare-and-swap (CAS) statement in $10\thread$ of  
\alg{alg:MultiSeqConsist}  amounts to a single-line {\bf if}. It is equivalent 
to: \quot{{\bf if} $j_\thread.tag = \static$ {\bf\ :} $j_\thread.tag = \thread$} 
(if $j$ is static, then it is set to active on thread $\thread$ (see 
\rem{rem:CASMeaning} for a discussion).

We prove correctness of \alg{alg:MultiSeqConsist}, for the \verb#SequentialC4# 
test suite with $N=$ and $k=2$ (see \fig{fig:LocalGlobalComparison}), that we 
later extend to the \verb#SequentialC5# test suite with $N=5$.

\begin{lemma}
\label{lem:MultiSeqConsist}
If \algg{alg:MultiSeqConsist} terminates without a horizon violation, its output 
(for the \verb#SequentialC4# test suite) is identical to that of 
\algg{alg:GlobalContinuous}.
\end{lemma}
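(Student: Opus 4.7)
My approach is to reduce the claim to \lem{lem:LocalContinuous}: I exhibit, for every successful run of \algg{alg:MultiSeqConsist}, an equivalent execution of \algg{alg:LocalContinuous}. Since the latter agrees with \algg{alg:GlobalContinuous} when it terminates without horizon violation, the equivalence with \algg{alg:GlobalContinuous} follows.

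First, I would formalize the multithreading stage as a finite absorbing Markov chain. A state consists of the attributes (tag, local time, position) of all $N$ spheres, the per-thread local variables ($i_\thread$, $j_\thread$, $x_\thread$, $\tau_\thread$, $\distance_\thread$, plus the scratch $x_\nbr$, $\tau_{i\nbr}$), and the buffer $\buffer{\text{next}_a,\text{next}_b}$. Transitions are defined by the random switch selecting and executing one statement per step. The absorbing classes are \emph{abort} (any thread takes the failure branch of $5\thread$) and \emph{global wait} (every $\text{next}_\thread = 26\thread$). For the \texttt{SequentialC4} suite ($N=4$, $k=2$) this state space is small enough to enumerate by depth-first search from the unique initial state, and I would verify two properties: (i) every reachable state admits at least one path to an absorbing state, and (ii) every \emph{global wait} state so reached produces the same lifting set $\LCAL$ and final lifted configuration as the reference run of \algg{alg:GlobalECMC}.

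Property (ii) amounts to showing that the ordered log of committed liftings of any successful multithreaded run is a valid trace of \algg{alg:LocalContinuous}. The CAS on $10\thread$ ensures that at most one thread ever captures a given target sphere, so commits are totally ordered. The horizon check on $5\thread$, combined with \rem{rem:HorizonDouble}, guarantees both a strictly positive local-time update at each commit and the absence of past crossings. The explicit position re-check on $13\thread$ discards any tentative commit that was based on a stale read of $j.x$ taken earlier on line $3\thread$. These three conditions exactly match the preconditions of \algg{alg:LocalContinuous}, and then \lem{lem:LocalContinuous} closes the reduction to \algg{alg:GlobalContinuous}.

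The main obstacle is combinatorial rather than conceptual: the delicate interleavings in which one thread reads a sphere attribute on $3\thread$ or evaluates the horizon on $5\thread$ before a neighbouring thread mutates the same attribute on $14\thread$--$16\thread$ must all be filtered out by the CAS and the re-check machinery, and enumerating them by hand is infeasible. The proof therefore proceeds mechanically by constructing the Markov chain and verifying that every \emph{global wait} leaf matches \Lref; the restriction to \texttt{SequentialC4} reflects this computational bottleneck rather than a gap in the argument, and a uniform argument for arbitrary $N$ and $k$ would require a closed-form invariant on multithreaded executions which (as noted in the statement) is not yet available.
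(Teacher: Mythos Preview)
Your proposal is correct and takes essentially the same approach as the paper: both formalize the multithreading stage as a finite absorbing Markov chain and rely on explicit enumeration of the reachable states for the \texttt{SequentialC4} test suite. The paper's enumeration finds $3670$ states with a \emph{unique} terminate state (buffer $\buffer{26\threada,26\threadb}$), so uniqueness alone already establishes that every successful run yields the same output; your additional reduction to \lem{lem:LocalContinuous} is a helpful conceptual gloss on why the CAS and re-check machinery should filter bad interleavings, but, as you yourself acknowledge, the actual argument still rests on the same mechanical enumeration rather than on that reduction.
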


\begin{proof}
In the \verb#SequentialC4# test suite with 
threads  \quot{$\threada$} and \quot{$\threadb$} we suppose that the switch 
samples $\threada$ and  $\threadb$ with equal probabilities.
The two-thread stage of 
\algg{alg:MultiSeqConsist} then consists in a finite Markov chain with the 
$3670$ states $S_n$ that are accessible from the initial state. The {\bf abort} 
state has no buffer content. All other $3669$ states comprise  the buffer 
\buffer{\text{next}_{\threada},\text{next}_{\threadb}}, the sphere objects (the 
spheres and their attributes: tag, local time, position), and some 
thread-specific local variables. One iteration of the Markov chain (selection of 
$\text{next}_\threada$ or $\text{next}_\threadb$, execution of the corresponding 
statement, buffer update) realizes the transition from $S_n$ to a state $S_m$ 
with probability $T_{nm} = 1/2$. The $3670\times 3670$ transition matrix $T = 
(T_{nm})$ has unit diagonal elements for the {\bf abort}, and for the unique 
{\bf terminate} state with buffer \buffer{26\threada,26\threadb}, which are both 
absorbing states of the Markov chain. Furthermore, we can show explicitly that 
all $3670$ states have a finite probability to reach an absorbing state in a 
finite number of steps. This proves that the  Markov chain is absorbing. For an 
absorbing Markov chain, all states that are not absorbing are transient, and 
they die out at large times. The algorithm thus either ends up in the unique 
{\bf terminate} state that corresponds to successful completion, or else in the 
{\bf abort} state. 
\end{proof}

\noindent
All states of the Markov chain may be projected onto their buffer 
\buffer{\text{next}_{\threada},\text{next}_{\threadb}} and visualized (see 
\fig{fig:Coincidence_table})).

\draftfigure[8cm]{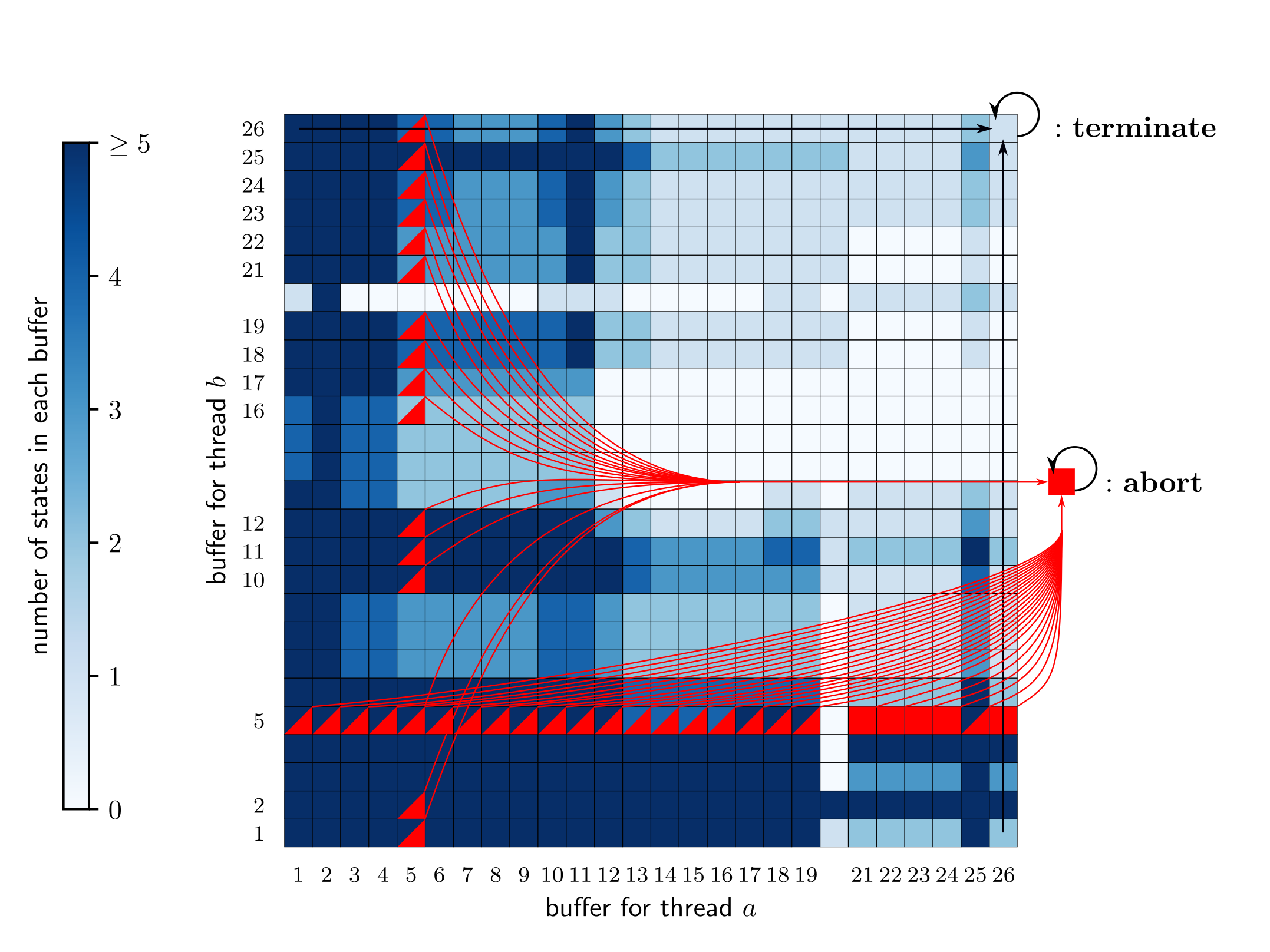}{The $3670$ states in 
\algg{alg:MultiSeqConsist} for the \texttt{SequentialC4} test suite projected 
onto the buffer content \buffer{\text{next}_{\threada},  \text{next}_{\threadb}} 
(see \fig{fig:LocalGlobalComparison}). The {\bf terminate} buffer 
\buffer{26\threada, 26\threadb} corresponds to a single state. 
}{fig:Coincidence_table} 

\begin{remark}[CAS statement]
The CAS statements (see $10\thread$ in \algg{alg:MultiSeqConsist}) acquire 
their full meaning
in the multithreaded \algg{alg:MultiCPP} The way in which they differ 
from simple {\bf if} statements can already be illustrated in the simplified 
setting. We suppose two threads $\threada$ and $\threadb$. Then, with 
$\target_\thread$ the target sphere on thread $\thread$, a buffer content 
\buffer{10\threada, 10\threadb}: \\
\begin{minipage}[t]{.5\linewidth}
 \vspace{0pt}
\begin{tabular}{rl}
\vdots & \vdots \\
$10\threada$&$j_{\threada}.tag.$CAS($\static,\threada$)\\
$11\threada$&{\bf if} $j_\threada.tag = \threada${\bf\ :}\\
\vdots & \vdots \\
\end{tabular}
\end{minipage}
\vline
\begin{minipage}[t]{.5\linewidth}
 \vspace{0pt}
\begin{tabular}{rl}
\vdots & \vdots \\
$10\threadb$&$j_{\threadb}.tag.$CAS$(\static,\threadb)$\\
$11\threadb$&{\bf if} $j_\threadb.tag = \threadb${\bf\ :}\\
\vdots & \vdots \\
\end{tabular}
\end{minipage}
can belong to a state with $j_{\threada}= 3 = j_{\threadb}=3$.\footnote{This 
corresponds to the lifted configuration of 
\subfig{fig:LocalGlobalComparison}{h}}. If the statement $10a$ is selected, 
sphere $3$ becomes active on thread $\threada$ (through the statement $3.tag= 
\threada$). In contrast, if the switch selects $10b$, sphere $3$ becomes active 
on thread $\threadb$. The program continues consistently for both switch 
choices, because the selection is made in a single (\quot{atomic}) step on each 
thread and because the sequential-consistency model avoids conflicting memory 
assignments. In contrast, if the switch selection from \buffer{10a,10b} is split 
as: \\
\begin{minipage}[t]{.5\linewidth}
 \vspace{0pt}
\begin{tabular}{rl}
\vdots & \vdots \\
$10'\threada$&{\bf if} $j_{\threada}.tag = \static${\bf\ :}\\
$10''\threada$&\hspace{3mm}$j_{\threada}.tag = \threada$\\
$11\threada$&{\bf if} $j_\threada.tag = \threada${\bf\ :}\\
\vdots & \vdots \\
\end{tabular}
\end{minipage}
\vline
\begin{minipage}[t]{.5\linewidth}
 \vspace{0pt}
\begin{tabular}{rl}
\vdots & \vdots \\
$10'\threadb$&{\bf if} $j_{\threadb}.tag = \static${\bf\ :}\\
$10''\threadb$&\hspace{3mm}$j_{\threadb}.tag = \threadb$\\
$11\threadb$&{\bf if} $j_\threadb.tag = \threadb${\bf\ :}\\
\vdots & \vdots \\
\end{tabular}
\end{minipage}
the sequence $10'\threada \to 10'\threadb \to 10''\threada \to 11\threada \to 
10''\threadb \to 11\threadb$ results in sphere $3$ first becoming active on 
thread $\threada$ (and the thread continuing as if this remained the case), and 
then on thread $\threadb$, which is inconsistent. In \alg{alg:MultiCPP}, the C++ 
implementation of multithreaded ECMC, the CAS likewise keeps this selection step 
atomic, and likewise excludes memory conflicts among all threads during this 
step. It thus plays the role of a lightweight memory lock. 
\label{rem:CASMeaning}
\end{remark}

\noindent
\alg{alg:MultiSeqConsist} features lock-free 
programming, which is also a key ingredient of
\alg{alg:MultiCPP}.
\begin{remark}[Lock-free programming]
To illustrate lock-free programming in \algg{alg:MultiSeqConsist}, we 
consider two threads, $\threada$ and $\threadb$. \\
\noindent
\begin{minipage}[t]{.5\linewidth}
 \vspace{0pt}
\begin{tabular}{rl}
$7\threada$&\hspace{6mm}$j_\threada\leftarrow \nbr$ \\
$8\threada$&\hspace{6mm}$x_\threada\leftarrow x_{\nbr}$ \\
\vdots & \vdots \\
$10\threada$&$j_\threada.tag.$CAS$(\static,\threada)$\\
\vdots & \vdots \\
$13\threada$ &\hspace{6mm}{\bf if} $x_\threada = j_\threada.x${\bf\ :}  \\
\vdots & \vdots \\
\end{tabular}
\end{minipage}
\vline
\begin{minipage}[t]{.5\linewidth}
 \vspace{0pt}
\begin{tabular}{rl}
\vdots & \vdots \\
$16\threadb$&\hspace{9mm}$i_\threadb.x \leftarrow i_\threadb.x + 
\tau_\threadb$\\
$17\threadb$&\hspace{9mm}$i_\threadb.tag \leftarrow \static$\\
\vdots & \vdots \\
\end{tabular}
\end{minipage}
\noindent
The identification of the target sphere $j_{\threada}$ on thread $\threada$ 
(statements $7\threada$ and $8\threada$) would be compromised if, before locking 
through the CAS statement at $10\threada$, it was changed in thread $\threadb$, 
where the same sphere $i_{\threadb}$ is active (see statements $16\threadb, 17 
\threadb$). However, the statement $13\threada$ checks that sphere 
$j_{\threada}$ has not moved. If this condition is not satisfied, the thread 
$\threada$ will end up being restarted (through statement $20\threada$). (See 
also \rem{rem:MemoryOrder}.)
\label{rem:StatementOrdering}
\end{remark}
\noindent

\subsection{Multithreaded ECMC (C++, OpenMP implementation)}
\label{sec:MultiCPP}
\alg{alg:MultiCPP}, discussed in this section, translates 
\alg{alg:MultiSeqConsist} into C++ (OpenMP). The CAS statement and lock-free 
programming assure its efficiency. A sphere's attributes are again its position, 
its local time, and its tag. The latter is a an atomic variable. 
We refer to line numbers in \alg{alg:MultiSeqConsist}.

\begin{algo}[Multithreaded ECMC (C++, OpenMP)]
With initial values as in \algg{alg:GlobalContinuous}, thread management is 
handled by OpenMP. The number of threads can be smaller than the number of 
active spheres. The multithreading stage transliterates the one of 
\algg{alg:MultiSeqConsist}. Statement $2\thread$ of \algg{alg:MultiSeqConsist} 
is implemented through a constraint graph (see \sect{sec:ConstraintGraph}). 
Statements $10\thread$ through $13 \thread$ are expressed as follows in 
\verb#MultiThreadECMC.cc#:\\
\begin{tabular}{rl}
$10\thread$  $\rightarrow$  & 
\verb#j->tag.compare_exchange_strong(...static,...#\\
$11\thread$  $\rightarrow$  & 
\verb#if (j->tag.load(memory_order) == iota)#  \\
$12\thread$  $\rightarrow$  & 
\verb#if (tau < distance) #\\
$13\thread$  $\rightarrow$  & 
\verb#if (x == j->x)#, \\
\end{tabular} \\
where the \verb#memory_order# qualifier may take on different values (see 
\sect{sec:Validation}). Important differences with \algg{alg:MultiSeqConsist} 
are discussed in \remtwo{rem:Necklace}{rem:MemoryOrder}. Output is as for 
\algg{alg:GlobalContinuous}.

\label{alg:MultiCPP}
\end{algo}
\begin{availability}
\algg{alg:MultiCPP} is implemented in \verb#MultiThreadECMC.cc#. It is executed 
in several validation and benchmarking scripts (see \sect{sec:Validation}).
\end{availability}

\begin{remark}[Active-sphere necklaces]
\algg{alg:MultiSeqConsist} restarts thread $\thread$ if the target sphere $j$ 
(for an active sphere $i$ on the thread) is itself active on another thread. 
With periodic boundary conditions, active-sphere necklaces, where all target 
spheres are active, can  deadlock the algorithm. To avoid this, 
\algg{alg:MultiCPP} moves sphere $i$ up to contact with $j$ before restarting 
(this is also used in \algg{alg:LocalECMC}).
\label{rem:Necklace}
\end{remark}

The source code of \alg{alg:MultiCPP} essentially translates that of 
\alg{alg:MultiSeqConsist}. The compiler may however change  the order of 
execution for some statements in order to gain efficiency. (The memory access in 
modern multi-core processors can also be very complex and, in particular, 
thread-dependent.) Attributes, such as the \verb#memory_order# qualifier in the 
CAS statement, may constrain the allowed changes of order. The reordering 
directives adopted in \alg{alg:MultiCPP} were chosen and validated with the help 
of extensive runs from randomly generated configurations. However, subtle 
pitfalls escaping notice through such testing  can be exposed by  explicitly 
reordering statements in \alg{alg:MultiSeqConsist}.

\begin{remark}[Memory-order directives in 
\alggtwo{alg:MultiSeqConsist}{alg:MultiCPP}]
In the \verb#SequentialC5# test suite  with $N=5$, interchanging statements 
$15\thread$ and $16\thread$ in \algg{alg:MultiSeqConsist} yields a spurious 
absorbing state, and invalidates the algorithm. The same  test suite can also be 
input into \algg{alg:MultiCPP}, where it passes the \verb#Ordering.sh# 
validation 
test, even if the statements in \verb#MultiCPP.cc# corresponding to 
$15\thread$ and $16\thread$ are exchanged. However, a $1\mu$s \verb#pause#  
statement introduced in the C++ program between what corresponds 
to the (interchanged) statements $15\thread$ and $16\thread$
produces a 
$sim 1\%$ error 
rate, illustrating that \algg{alg:MultiCPP} is unsafe without a protection of 
the order of the said statements. Safety may be increased through
atomic
position and local time variables, allowing the use of the 
\verb#fetch_add()# operation to displace spheres.
\label{rem:MemoryOrder}
\end{remark}
\noindent

\section{Tools, validation protocols, benchmarks, and extensions}
\label{sec:Implementation}
We now discuss the implementations of the algorithms of \sect{sec:Algorithms}, 
as well as their validation protocols, benchmarks, and possible extensions. We 
also discuss the prospects of this method beyond this paper's focus on 
the interval between two breakpoints $h$ and $h'$.

In our implementation of \algthree{alg:GlobalECMC}{alg:LocalECMC}{alg:MultiCPP}, 
a directed constraint graph encodes the possible pairs of active and target 
spheres as arrows \arrow{i}{j} (see \sect{sec:ConstraintGraph}). The outdegree 
of this graph is at most three, and a rough constraint graph \Gthree with, 
usually,
outdegree three for all vertices is easily generated. 
\Gthree may contain redundant arrows that cannot correspond to liftings.  Our 
pruning algorithm eliminates many of them. 
We also prove that \Gmin, the minimal constraint graph, is planar. This 
may 
be of importance if disjoint parts of the constraint graph are stored on 
different CPUs, each with a number of dedicated threads. In general, we expect 
constraint graphs to be a useful tool for hard-sphere production codes, with 
typically \bigOb{N} liftings between changes of $\vvactive$.

Validation scripts are discussed in \sect{sec:Validation}. Scripts check that  
the  liftings of standard cell-based ECMC are all accounted for  
in the used constraint graph. For the ECMC algorithms of 
\sect{sec:Algorithms}, the 
set $\LCAL$ of liftings provides the complete history of each run, and 
scripts check that they corresponds to \Lref.

In \sect{sec:Benchmarks}, we then benchmark 
\alg{alg:MultiCPP} and demonstrate a speed-up  by an order of 
magnitude for a single CPU with 40 threads on an \xCPU (see 
\sect{sec:Benchmarks}). The overhead introduced by multithreading ($\sim 2.4$) 
is very reasonable. We then discuss possible extension of our methods (see 
\sect{sec:Extensions}).

\subsection{Constraint graphs}
\label{sec:ConstraintGraph}

For a given initial condition $\CCAL_h$ and velocity $\vvactive$, arrows 
\arrow{i}{j} of the constraint graph $\GCAL_{\vvactive}$ represent possible 
liftings \lifting{i}{j}{\xvec}{\xvec'}{t}~\cite{KapferPolytope2013}. Arrows 
remain unchanged between breakpoints because spheres $i$ and $j$ with a 
perpendicular  distance of  less than $2 \sigma$ cannot hop over one another 
(this argument can be adapted to periodic boundary conditions), and pairs with 
larger perpendicular distance are absent from $\GCAL$. All constraint graphs 
$\GCAL_{\vvactive}$ are supersets of a minimal constraint graph 
$\Gmin[\vvactive]  \equiv \Gmin[-\vvactive]$ (where the equivalence is 
understood as $\arrow{i}{j}_{\vvactive} \equiv \arrow{j}{i}_{-\vvactive} $).

\begin{remark}[Constraint graphs and convex polytopes]
Each arrow \arrow{i}{j} of the  constraint graph $\GCAL_{\vvactive}$ provides 
(for $\vvactive = (1,0)$) an inequality
\begin{equation}
\begin{aligned}
x_i & \le x_j - b_{ij}
\end{aligned}
\label{equ:Inequality}
\end{equation}
that is tight ($x_i = x_j - b_{ij}$) when $i$ lifts to $j$ at contact (if there 
are configurations $\CCAL$ where it is tight, then \arrow{i}{j}  belongs to 
\Gmin[\vvactive]). The set of inequalities defines a convex polytope. With 
periodic boundary conditions (unaccounted for in \eq{equ:Inequality}),
this polytope is infinite in the direction corresponding to uniform translation 
of all spheres with $\vvactive$ (see~\cite{KapferPolytope2013}).
\end{remark}

\begin{remark}[Constraint graphs and irreducibility]
Rigorously, we define the constraint 
graph \Gmin[\vvactive] as the set of arrows \arrow{i}{j} that 
are 
encountered from $\CCAL_h$ by \algg{alg:GlobalContinuous} (or, equivalently, 
\algg{alg:GlobalECMC}) at liftings $\lift{t}\ \forall t \in (-\infty, \infty)$. 
The liftings for $t < 0$ can be constructed because of time-reversal invariance 
(see \rem{rem:TimeReversalInvariance}). For the same reason, we have 
$\Gmin[\vvactive] \equiv \Gmin[-\vvactive]$, and the set of arrows reached from 
$\CCAL_h$ is equivalent to that reached from any configuration that is reached 
from $\CCAL$ (and in particular $\CCAL_{h'}$). While we expect ECMC to 
be 
irreducible in the polytope defined through the inequalities in 
\eq{equ:Inequality}, we do not require irreducibility for the definition of 
\Gmin.
\end{remark}

Between breakpoints, the active sphere $i$ can lift to at most three other 
spheres, namely the sphere $j^0$ minimizing the time of flight $\tau_{ij}$ in a 
corridor of width $2\sigma$ around the center of $i$ , and likewise the 
closest-by sphere $j^+$ in the corridors  $[\sigma, 2 \sigma]$ and sphere $j^-$ 
in the corridor $[-2\sigma, -\sigma]$ (see \subfig{fig:ThreeLane}{a}). The set 
of arrows $\SET{\arrow{i}{j^0}, \arrow{i}{j^-}, \arrow{i}{j^+}\ \forall i \in 
\SET{1 \TO N}}$ constitutes the constraint graph \Gthree[\vvactive], which is 
thus easily computed. Except for small systems (when the corridors may be 
empty) \Gthree has outdegree three  for all spheres $i$. However, its indegree 
is not fixed. The constraint graph \Gthree is not necessarily locally 
planar,\footnote{\quot{Locally planar} means that any subgraph that does not 
sense the periodic boundary conditions is planar.} and in the embedding provided 
by the sphere centers of a given configuration, non-local arrows can be present 
(see \subfig{fig:ConstraintGraph}{a}). However, \Gmin can be proven to be 
locally planar
(see \subfig{fig:ConstraintGraph}{b})).

\draftfigure[7cm]{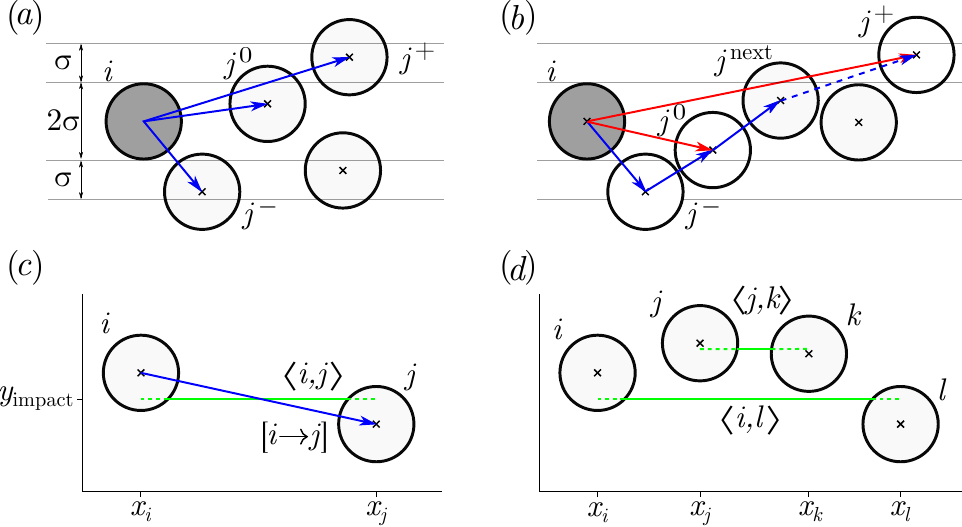}{Constraint graphs, pruning, and planarity.
\subcap{a} Corridors of an active sphere $i$, with arrows \arrow{i}{j^-}, 
\arrow{i}{j^0},
and \arrow{i}{j^+} belonging to \Gthree[(1,0)]. 
\subcap{b} Pruning of an arrow \arrow{i}{j^+} through a sphere 
$j^{\text{next}}$ without there being an arrow 
\arrow{i}{j^{\text{next}} }.
\subcap{c} Spheres $i$ and $j$, arrow \arrow{i}{j}, and impact path 
\impact{i}{j}.
Other spheres cannot cover \impact{i}{j}.
\subcap{d} Spheres $i,j,k,l$ with $x_i< \dots<x_l$ and impact paths
\impact{i}{l} and \impact{j}{k}. }
{fig:ThreeLane}
\begin{lemma}
The graph \Gmin is locally planar, and any sphere configuration that can be 
reached between breakpoints provides a locally planar embedding.
\end{lemma}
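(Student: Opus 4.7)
The strategy is proof by contradiction: suppose two arrows of $\Gmin$, say $\arrow{i}{j}$ and $\arrow{k}{l}$, embed as crossing straight segments between sphere centers in some reachable configuration $\CCAL$. First I would collect the structural invariants of the dynamics between breakpoints. Because $\vvactive=(1,0)$, every $y$-coordinate is fixed throughout $[h,h']$, and within each horizontal strip of width $2\sigma$ the left-to-right $x$-ordering of sphere centers is preserved, since hard spheres in the same corridor cannot pass through one another. Combined with the definition of $\Gmin$ (via \rem{rem:TimeReversalInvariance}) that requires each arrow $\arrow{m}{n}\in\Gmin$ to be realized as an actual lifting by \algg{alg:GlobalContinuous} in some reachable configuration, this forces $|y_m-y_n|<2\sigma$ and $x_m<x_n$ in every reachable configuration.

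Next I would exploit the geometric meaning of $\arrow{i}{j}\in\Gmin$: there exists a reachable configuration $\CCAL_1^*$ in which the impact path $\impact{i}{j}$---the horizontal capsule of half-width $\sigma$ at height $y_i$ running from the center of $i$ to its contact position with $j$---contains no third sphere center (\fig{fig:ThreeLane}(c)). The analogous statement holds for $\arrow{k}{l}$ in some reachable $\CCAL_2^*$. A crossing of the segments in $\CCAL$, together with the structural invariants above, forces the labelled configuration of \fig{fig:ThreeLane}(d): up to the symmetry $\vvactive\leftrightarrow-\vvactive$, one has $x_i<x_k<x_j<x_l$ with $k$ and $l$ on opposite sides of the horizontal strip of $i$. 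Following the dynamics from $\CCAL$ into $\CCAL_1^*$, the fixed $y$-values and the preserved corridor orders then place at least one of $k, l$ inside the capsule of $\impact{i}{j}$, contradicting its unobstructedness in $\CCAL_1^*$. Running the same argument with $\CCAL$ itself playing the role of the reference configuration proves both assertions of the lemma simultaneously.

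The hard part will be the geometric case analysis just sketched: one must enumerate the orderings of the four $y$-values compatible with a crossing and verify in each that the corridor-preservation and the hard-core non-overlap in both $\CCAL$ and $\CCAL_1^*$ force the asserted blockage. A secondary subtlety is the word \emph{local} in the statement: under periodic boundary conditions a pair of arrows can wind around the torus and cross in a naive straight-line embedding even when no short-scale contradiction arises, so one must restrict attention to subgraphs that do not probe the boundary. This restriction is automatic in the above argument, since the geometric obstruction always operates on the scale $2\sigma$.
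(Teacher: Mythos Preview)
Your approach is essentially the paper's---impact paths plus a case analysis on four spheres---but you over-engineer it by shuttling between the current configuration $\CCAL$ and the witness configurations $\CCAL_1^*$, $\CCAL_2^*$.

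The paper's proof works entirely in a single reachable configuration. It orders the four spheres by $x$-coordinate as $x_i<x_j<x_k<x_l$ and checks the three nontrivial pairings of arrows ($\arrow{i}{j}$ with $\arrow{k}{l}$; $\arrow{i}{k}$ with $\arrow{j}{l}$; $\arrow{i}{l}$ with $\arrow{j}{k}$), ruling out a crossing in each by locating a sphere on the wrong side of an impact path. The impact path \impact{i}{j} is taken as the horizontal \emph{segment} at the contact height $y^{\text{impact}}$ (not a capsule at height $y_i$, as you write), and the key claim is that if $\arrow{i}{j}\in\Gmin$ then no third sphere covers this segment in the current configuration.

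Your invariants (fixed $y$-coordinates, preserved corridor $x$-orderings) are precisely what justifies that claim in an arbitrary reachable configuration rather than only in $\CCAL_1^*$: an obstructing sphere now would, by those invariants, still obstruct in $\CCAL_1^*$, contradicting the lifting there. The paper uses this implicitly. Once you grant it, the ``follow the dynamics from $\CCAL$ to $\CCAL_1^*$'' step is unnecessary and you can run the three-case argument directly in $\CCAL$. Your own case sketch (forcing $x_i<x_k<x_j<x_l$ with $k,l$ on opposite sides of the strip of $i$) covers only one of the three pairings and tangles the labeling; ordering all four spheres by $x$ first, as the paper does, is the cleaner organizing principle.
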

\begin{proof}
We first consider two spheres $i$ and $j$ for $\vvactive = (1,0)$ in the plane 
(without taking into account periodic boundary conditions). The arrow 
\arrow{i}{j} is drawn by connecting the centers of $i$ and $j$. The impact path 
\impact{i}{j} is the horizontal line segment connecting  
$(x_i,y^{\text{impact}})$  and $(x_j,y^{\text{impact}})$ where 
$y^{\text{impact}}$ is the vertical position at which the two spheres can touch 
by moving them with $\vvactive$ (see \subfig{fig:ThreeLane}{c}). If the arrow 
\arrow{i}{j} exists, no other sphere can intersect the impact path 
\impact{i}{j}.

For four spheres $i,j,k,l$, we now show that no two arrows between spheres can 
cross each other. The $x$-values can be ordered as $x_i < x_j < x_k < x_l$ 
(again without taking into account periodic boundary conditions). Two arrows 
between three spheres trivially cannot cross. For arrows between two pairs of 
spheres, trivially arrows \arrow{i}{j} and \arrow{k}{l} cannot cross. Likewise, 
if there is an arrow \arrow{i}{k} then sphere $j$ must be on one side of the 
impact path \impact{l}{k}, and $k$ must be on the other side of \impact{j}{l}, 
so that arrows \arrow{i}{k} and \arrow{j}{l} cannot cross. Finally, if  arrow 
\arrow{i}{l} exists, then $j$ and $k$ must be on the same side of the impact 
path \impact{i}{l} in order to have an impact path. But then, \arrow{j}{k} 
cannot cross \arrow{i}{l} (see \subfig{fig:ThreeLane}{d}).
\end{proof}

The minimal constraint graph  \Gmin is more difficult to compute than \Gthree 
because the underlying \quot{redundancy detection} problem is not strictly 
polynomial in system size, although practical algorithms 
exist~\cite{Fukuda2016}. However, \Gthree can be pruned of redundant constraints 
that correspond to pairs of spheres $i$ and $j$ that are prevented from lifting 
by other spheres. For example, given arrows \arrow{i}{j}, \arrow{j}{k} and  
\arrow{i}{k}, the latter can be \quot{first-order} pruned (eliminated with one 
intermediary, namely $j$) if $b_{ij} + b_{jk} > b_{ik}$ (in \fig{fig:ThreeLane}, 
\arrow{i}{j^+} can be pruned for this reason). The presence of the arrow 
\arrow{j}{k} is not necessary to make this argument work (see 
\fig{fig:ThreeLane}{b}). Pruning can be taken to higher orders. To second order, 
if  $b_{ij} + b_{jk} + b_{kl}> b_{il}$, then the arrow \arrow{i}{l} can be 
eliminated. Finally, any arrow \arrow{i}{j} in $\GCAL_{\vvactive}$ can be pruned 
through symmetrization if it is unmatched by \arrow{j}{i} in 
$\GCAL_{-\vvactive}$ because $\Gmin[\vvactive] \equiv \Gmin[-\vvactive]$, with 
$\GCAL[\vvactive]$ and $\GCAL[-\vvactive]$ obtained separately (see 
\rem{rem:TimeReversalInvariance}).

Rarely, arrows can be eliminated by symmetrizing 
graphs that were pruned to third or fourth order, and constraint graphs that 
are 
obtained in this way appear close to \Gmin (see \sect{sec:Validation}).

\begin{availability}
The constraint graph \Gthree is constructed in \verb#GenerateG3.py# and pruned 
to $\GCAL$ in \verb#PruneG.py#. The program \verb#GraphValidateCellECMC.cc# runs 
cell-based ECMC to verify the consistency of $\GCAL$.
\end{availability}

\begin{figure}[htb]
\begin{center}
\includegraphics[width=\linewidth]{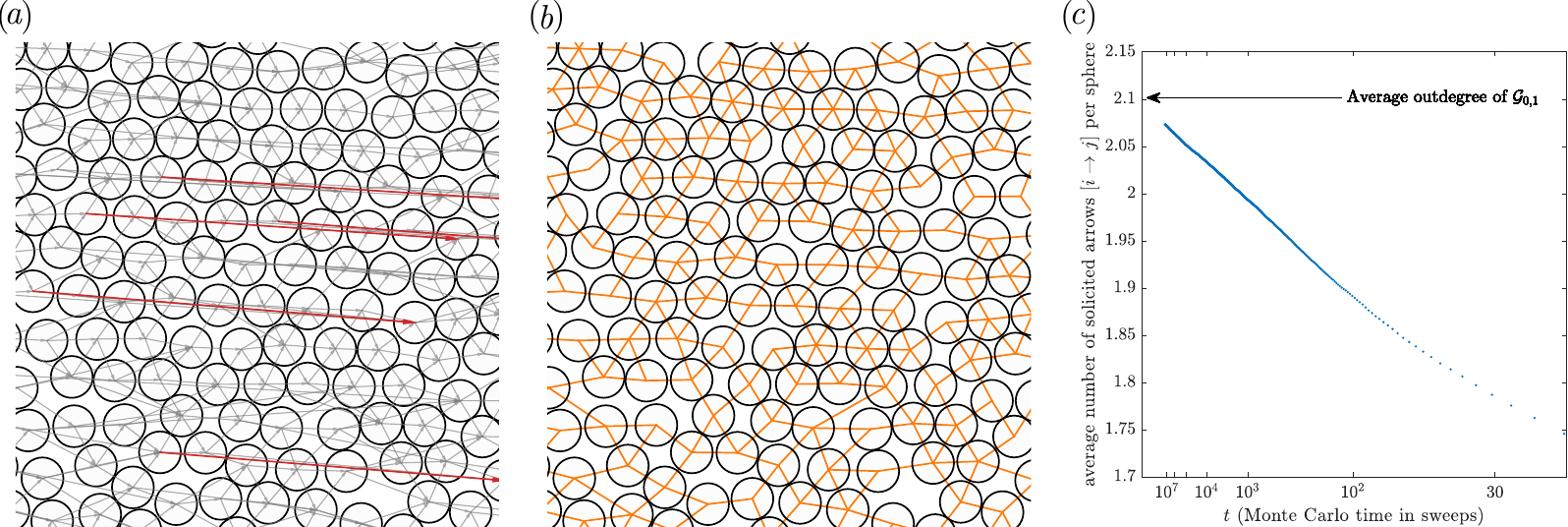}
\end{center}
\caption{ECMC constraint graphs for $\CCAL^{256}$ (see \sect{sec:Validation} for 
definition). \subcap{a} \Gthree for this configuration (detail), with 
highlighted non-local arrows. \subcap{b} $\GCAL^{256}$ (same detail),  obtained 
from \Gthree through fourth-order pruning followed by symmetrization. 
\subcap{c} Number of solicited arrows in  $\GCAL^{256}$ in 
a long cell-based ECMC run, compared to its average outdegree. }
\label{fig:ConstraintGraph}
\end{figure}

\subsection{Validation}
\label{sec:Validation}
Our programs apply to arbitrary density $\eta = N \pi \sigma^2/ L^2 $ and 
linear 
size $L$ of the periodic square box (with $N = M^2$). We provide sets of 
configurations and constraint graphs for validation and benchmarking. One such 
set consists in a 
configuration  $\CCAL^{256}$ at $M=256$, and for $\eta = 0.708$ and a
fourth-order symmetrized constraint graph $\GCAL^{256}$. Where 
applicable, the number of active spheres varies as $k = 1,2,4 \TO k^{\max}$ and 
the 
number of threads as $n_{\iota} =1,2,3 \TO n_{\iota}^{\max}$. For fixed $k$ and 
$n_{\iota}$, there are $n_{\text{run}}$ runs that vary $h$ and $h'$.

\paragraph{Constraint-graph validation}

Constraint graphs are generated in the \verb#Setup.sh# script. The 
\verb#GraphValidateCellECMC# test performs cell-based ECMC
derived from \verb#CellECMC.f90# ~\cite{Bernard2011,Engel2013}, where spheres 
are assigned to local cells and neighborhood-cell searches identify possible 
liftings. Cell-based ECMC must exclusively solicit liftings accounted for in 
$\GCAL$. The \verb#GraphValidateCellECMC# test also records the 
sweep (lifting per sphere) at which an arrow $\arrow{i}{j} \in \GCAL$ is first 
solicited in a lifting and 
compares the time evolution of the average number of solicited arrows with its 
average outdegree. The $\GCAL^{256}$ constraint graph passes the validation test 
with $t = \fpn{2}{7}$ sweeps. The outdegree of 
$\GCAL^{256}$ is $2.1$, and $98.7$ \%  of its arrows are solicited during the 
test. Logarithmic extrapolation 
(with $1/{(\ln{t})}^\alpha$, 
$\alpha=1.7$) suggests that $\GCAL^{256}$ essentially agrees with \Gmin (see 
\subfig{fig:ConstraintGraph}{c}). Use of $\GCAL^{256}$ rather than \Gthree 
speeds up ECMC, but further performance gains through additional pruning are 
certainly extremely limited.

\paragraph{Validation of \alggtwo{alg:LocalECMC}{alg:MultiCPP}}

Our implementations of \algtwo{alg:LocalECMC}{alg:MultiCPP} are modified as 
discussed in \rem{rem:PartialValidation}. Runs compute the set $\LCAL_{t^*}$ to 
the earliest horizon-violation time $t^*$ (with $t^* = h'$ if the run concludes 
successfully). The \verb#PValidateECMC.sh# test first advances 
$\CCAL^{256}= \CCAL_{t=0}$ to a random breakpoint $h$ (using \Lref).
Each test run is in the interval $[h,h']$, where $h'$ is randomly 
chosen. To pass the validation test, $\LCAL_{t^*}$ must for each run agree with 
\Lref (see 
\sect{sec:ComputerCode} for details of scripts used). \alg{alg:LocalECMC} passes 
the \verb#PValidateECMC.sh# test with  $n_{\text{run}} = 
\fpn{1}{3}$ for $k^{\max} = 8192$. 

Our x86 computer has two Xeon Gold 6230 CPUs with variable frequency from $2.1$ 
GHz to $3.9$ GHz, each with $20$ cores and $40$ hardware threads. We use OpenMP 
directives to restrict all threads to a single \xCPU. We consider again 
$\CCAL^{256}= \CCAL_{t=0}$ as the initial configuration, and then run the 
program from $h$ to $h'$. On our \xCPU, \alg{alg:MultiCPP} passes the 
\verb#CValidateECMC.sh# test with  $n_{\text{run}} = 
\fpn{1}{3}$, $k^{\max} = 8192$ and $n^{\max}_{\iota} = 40$.

On our \ACPU (Nvidia Jetson with Cortex A57 CPU (at $1.43$ GHz) with four cores 
and four hardware threads), we again consider $\CCAL^{256}= \CCAL_{t=0}$ 
as initial configuration. For the same system 
parameters as above, \alg{alg:MultiCPP} passes the \verb#CValidateECMC.sh# test 
with  $n_{\text{run}} = \fpn{1}{3}$, $k^{\max} = 8192$ and 
$n^{\max}_{\iota} = 4$. The ARM architecture allows dynamic 
re-ordering of operations, and the separate validation test
more severely scrutinizes thread interactions than for the \xCPU.

On both CPUs, \alg{alg:MultiCPP} passes the \verb#CValidateECMC.sh# 
test with the following choices of \verb#memory_order# directives: 
\begin{description} \item \verb#memory_order_relaxed#. This most permissive 
memory ordering of the  C++ memory model imposes no constraints on compiler 
optimization or dynamic re-ordering of operations by the processors, and only
guarantees the atomic nature of the CAS operation. Such 
re-orderings are more 
aggressive on \ACPUs than on \xCPUs. This memory ordering does not guarantee 
that the statements constituting the lock-less lock are executed as required 
(see \rem{rem:MemoryOrder}). 

\item \verb#memory_order_seq_cst# for all memory operations on the tag 
attribute. This directive imposes the sequential-consistency model (see 
\rem{rem:MemoryOrder}) for each access of the tag attribute. It
slows down the code by $40$\% compared to the \verb#memory_order_relaxed# 
directive. 

\item \verb#memory_order_acquire# on \verb#load#, \verb#memory_order_release# on 
\verb#store#. This directive implies \quot{acquire--release} semantics on the 
tag attribute. It
imposes a lock-free exchange at each operation on the tag attribute, 
so that all variables, including positions and local times, are synchronized 
between threads during tag access. CAS remains \verb#memory_order_seq_cst#. This 
directive maintains speed compared to
\verb#memory_order_relaxed#, 
yet provides better guarantees on the propagation of variable 
modification between threads. \verb#MultiThreadECMC.cc# compiles by default 
with this directive.

\end{description}

\subsection{Benchmarks for \alg{alg:MultiCPP} (x86 and ARM)}
\label{sec:Benchmarks}

\alg{alg:MultiCPP} is modified as discussed in \rem{rem:PartialValidation} 
(program execution continues in spite of horizon violations) and used for large 
values of $h'$. This measures the net cost of steady-state thread interaction, 
without taking into account thread-setup times. We report here on results of the 
\verb#BenchmarkECMC.sh# script for $\CCAL^{256}$ as an initial configuration 
and 
$\GCAL^{256}$ with  $k=80$ active spheres. The number of threads varies as 
$n_{\iota} =1,2,3 \TO n_{\iota}^{\max}$, with $n_{\text{run}}=20$. 

On our \xCPU (see \sect{sec:Validation}), the \verb#BenchmarkECMC.sh#
script is parame\-trized with $n_{\iota}^{\max} =40$. The benchmark speed 
increases roughly linearly up to $20$  threads (reaching a speed-up of $12$ for 
$20$ threads), and then keeps improving more slowly with a maximum for $40$ 
threads at a speed-up of $16$ and an absolute speed of $\sim \fpn{1.7}{12}$ 
events/hour (see \fig{fig:EPH_benchmark}). The variable frequency of Xeon 
processors under high load may contribute to this complex behavior. On a single 
thread, our program runs $2.4$ times slower than an unthreaded code, due to 
the 
eliminated overhead from threading constructs. The original 
\verb#CellECMC.f90# cell-based production code generates \fpn{3}{10} 
events/hour. The use of a  constraint graph, rather than a cell-based search, 
thus improves performance by almost an order of magnitude, if the set-up of 
$\GCAL$ is not accounted for.

On our \ACPU, the \verb#BenchmarkECMC.sh# script is parametrized with 
$n_{\iota}^{\max} =4$. The benchmark speed increases as the number of threads, 
reaching a speed-up of $4.0$ for $n_{\iota}=4$. The absolute speed is about 
seven times smaller than for our \xCPU for a comparable number of threads, as 
may be expected for a low-power processor designed for use in mobile phones.

\draftfigure[10cm]{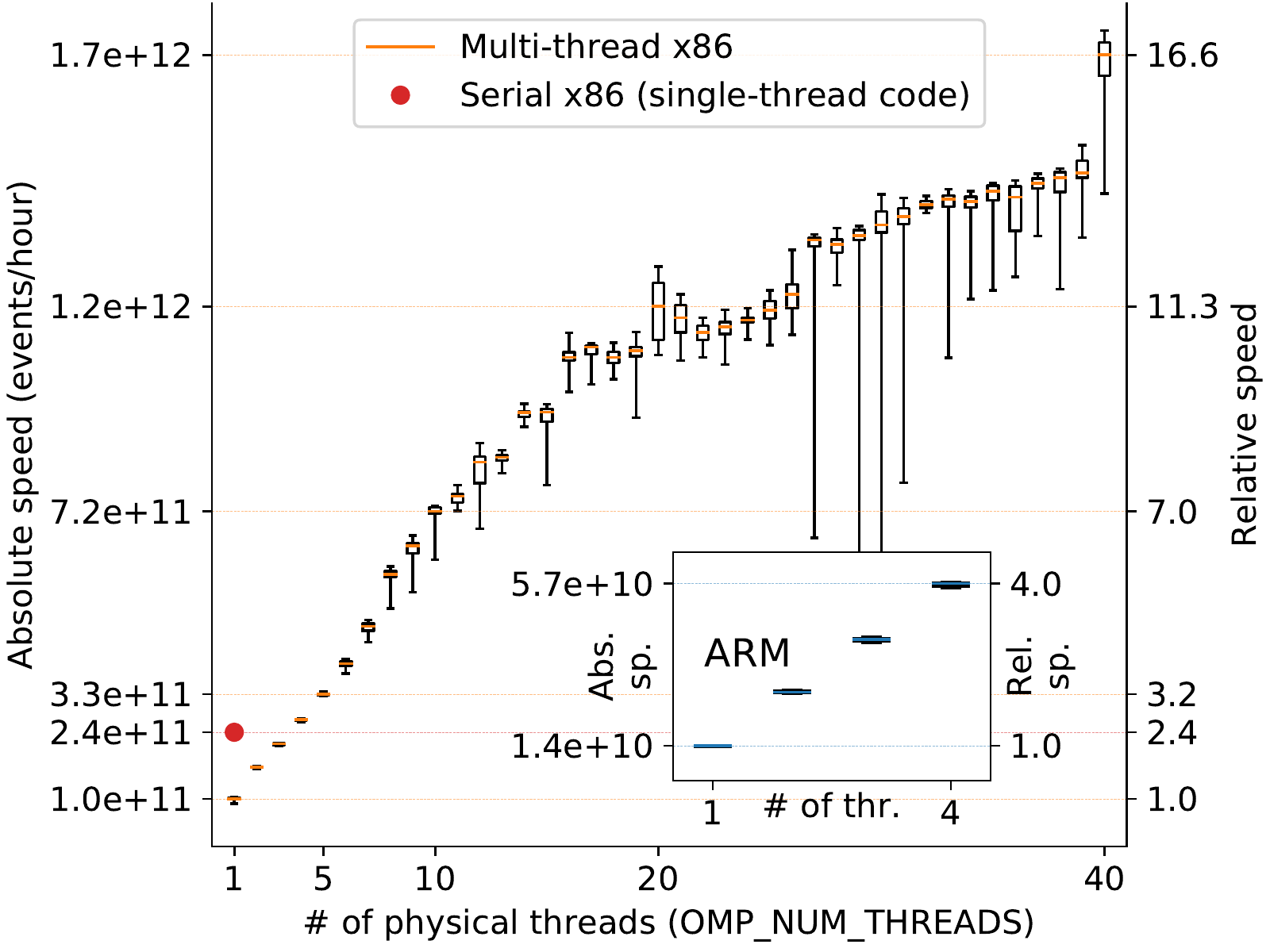}
{Output of the \texttt{BenchmarkECMC.sh} script  for \alg{alg:MultiCPP}
(five-number summary of $20$ runs) for $k=80$
on an \xCPU with $20$ cores and $1 \TO 40$ threads, and for 
serial code that processes active chains 
sequentially. Inset: Output of the script for 
a four-core 
ARM CPU.}
{fig:EPH_benchmark}

\subsection{Birthday problem, full ECMC, multi-CPU extensions}
\label{sec:Extensions}

In this section, we treat some practical aspects for the use of
\alg{alg:MultiCPP}. 

\paragraph{Birthday problem}
We analyze multithreaded ECMC in terms of the 
(generalized) birthday problem, which considers the probability $p$ that two 
among $\kbirth$ integers (modeling individuals) sampled from a discrete uniform 
distribution in the set $\SET{1,2 \TO \Nbirth}$ (modeling birthdays) 
are the same. For large $\Nbirth$, $p \sim [1-\exp{   (-\kbirth^2/ (2 \Nbirth)  
) } ]$~\cite{Mathis1991}, which is small if $\kbirth \lesssim \sqrt{\Nbirth}$. 
At 
constant density $\eta$, sphere radius $\sigma$, velocity $\vvactive$, and time 
interval $h' - h$, each active chain $\iota$ is restricted to a region of 
constant area, whereas the total area of the simulation box is $\eta N$. We may 
suppose that the $k$ active spheres are randomly positioned in the simulation 
box broken up into a grid of $\propto N$ constant-area cells. For $k \lesssim 
\sqrt{N}$, we expect the probability that one of these cells contains two active 
spheres to remain constant for $N \to \infty$, and therefore also the 
probability of an update-order violation for constant $h'-h$.

\paragraph{Restarts}
Our algorithms reproduce output of \alg{alg:GlobalECMC} only if they do not 
abort. In production code, the effects of horizon violations will have to be 
repaired. Two strategies appear feasible. First, the algorithm may restart the 
run from a copy of \lifted{\CCAL_h}{\ACAL_h} at the initial breakpoint $h$, and 
choose a smaller breakpoint $h''$, for example the time of 
abort. The successful termination of this restart is not guaranteed, as the 
individual threads may organize differently. Second, the time evolution may be 
reconstructed from $\LCAL_{t^*}$ to the earliest horizon-violation time $t^*$ 
(see \rem{rem:PartialValidation}), and $t^*$ may then be used as the subsequent 
initial breakpoint. Besides an efficient restart strategy, a multithreaded 
production code will also need an efficient parallel algorithm for computing 
$\GCAL$ after a change of $\vvactive$.

\paragraph{Multi-CPU implementations}
\alg{alg:MultiCPP} is spelled out for a single shared-memory CPU
and for threads that may access attributes of all spheres (see statement 
$10\thread$ in \alg{alg:MultiSeqConsist} and \rem{rem:Necklace}). However, 
thread interactions are local and immutable in between breakpoints (as evidenced 
by the constraint graphs). This invites generalizations of the algorithm to 
multiple CPUs (each of them with many threads). Most simply, two CPUs could 
administer disjoint parts of the constraint graph, for example with interface 
vertices doubled up on both of them (see \fig{fig:ConstraintCutup}). In this 
way, an active sphere arriving at an interface would simply be copied out to 
the neighboring CPU. The generalization to multiple CPUs appears 
straightforward.  \draftfigure[6cm]{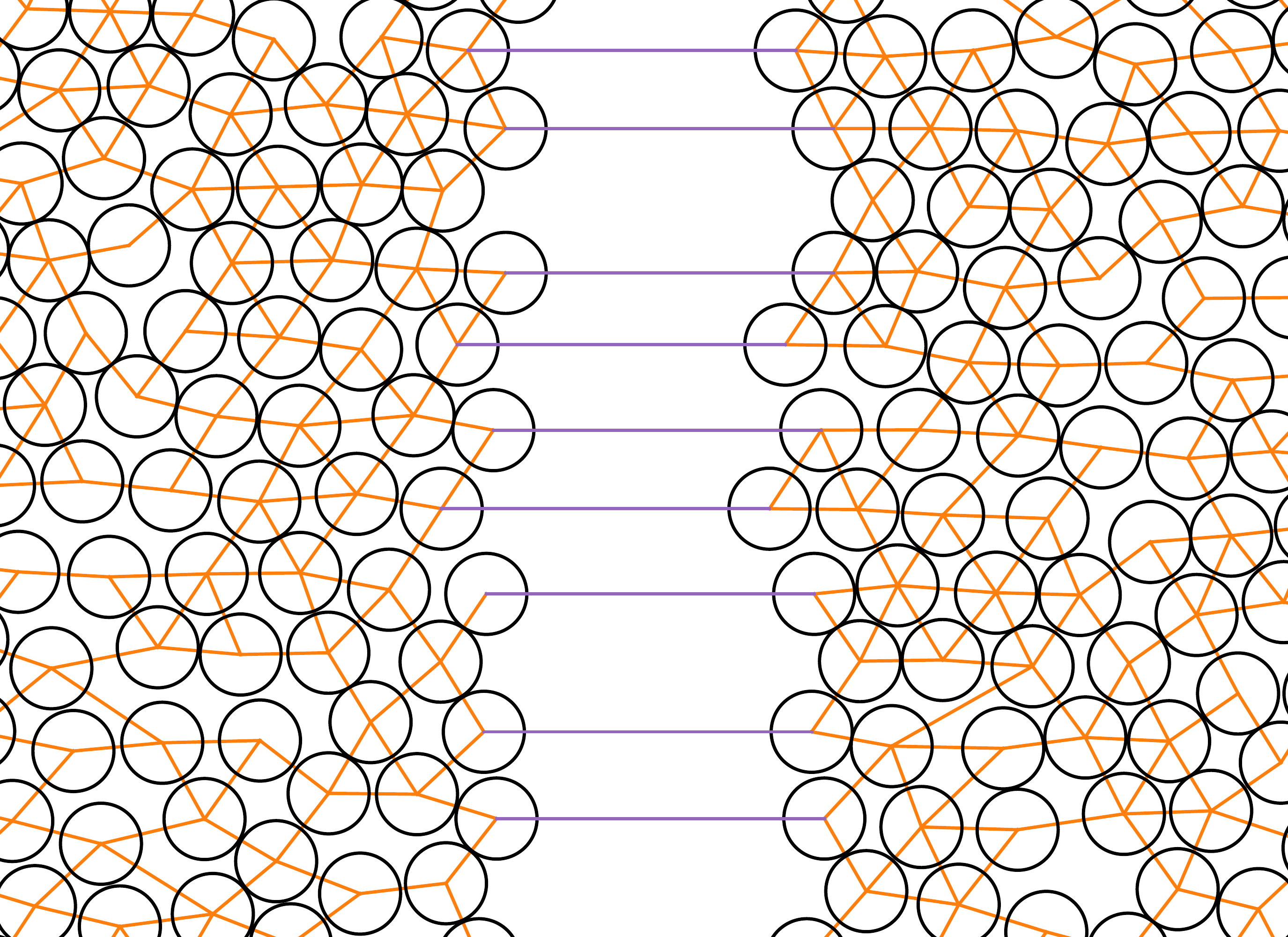}{A constraint graph 
doubled up for a multi-CPU implementation of \algg{alg:MultiCPP} (detail of 
$\CCAL^{256}$ configuration shown). Interface vertices appear on both 
sides.}{fig:ConstraintCutup}

\section{Available computer code}
\label{sec:ComputerCode}

All implemented algorithms and used scripts that are made available on GitHub in 
\verb#ParaSpheres#, the public repository, which is part of a public GitHub 
organization.\footnote{The organization's url is  
\url{https://github.com/jellyfysh}.} Code is made 
available under the GNU GPLv3 license (for details see the \verb#LICENSE# file).

The repository can be forked (that is, copied to an outside user's own public 
repository) and from there studied, modified and run in the user's local 
environment. Users may contribute to the \verb#ParaSpheres# project  
\emph{via} pull requests (see the \verb#README.md# and \verb#CONTRIBUTING.md# 
files for instructions and guidelines). All communication (bug reports, 
suggestions) take place through GitHub \quot{Issues}, that can be opened in the 
repository by any user or contributor, and that are classified in GitHub 
projects.

\paragraph{Implemented algorithms}

The following programs are located
in the directory tree under their  language (\verb#F90#, \verb#Python# or 
\verb#CPP#) and in similarly named subdirectories, that
all contain \verb#README# files for further details.
Some of the longer programs are split into modules.\\
\\
\begin{tabular}{ll}
Code/Directory   & Algorithm / Usage \\ \hline
\verb#CellECMC.f90# & Cell-based production ECMC~\cite{Bernard2011}\\
\verb#GenerateG3.py# &       Generate \Gthree 
(\sect{sec:ConstraintGraph})\\ 
\verb#PruneG.py# & Prune $\GCAL$ (\sect{sec:ConstraintGraph})\\ 
\verb#GraphValidateCellECMC.cc# & Validate $\GCAL$ against cell-based ECMC \\ 
\verb#GlobalTimeECMC.py# & \algg{alg:GlobalECMC} 
(\sect{sec:GlobalContinuousECMC}) \\ 
\verb#SingleThreadLocalTimeECMC.py# & \algg{alg:LocalECMC} 
(\sect{sec:LocalContinuousECMC}) \\ 
\verb#SequentialMultiThreadECMC.py# & \algg{alg:MultiSeqConsist} 
(\sect{sec:MultiSeqConsist}) \\
\verb#MultiThreadECMC.cc# & \algg{alg:MultiCPP} (\sect{sec:MultiCPP})
\end{tabular}

\paragraph{Scripts and validation suites}

The \verb#Scripts# directory provides the following
bash scripts to compile and run groups of 
programs and 
to reproduce all our results:\\
\\
\begin{tabular}{ll}
Script        & Summary of usage \\ \hline
\verb#Setup.sh# & Prepare $\CCAL_{t=0}$, $\GCAL$, \Lref \\
\verb#SequentialC4.sh# & Test suite for \algg{alg:MultiSeqConsist} with 
$N=4$\\
\verb#SequentialC5.sh# & Test suite for \algg{alg:MultiSeqConsist} with $N=5$
(see \rem{rem:MemoryOrder})\\ 
\verb#Ordering.sh# & Test suite for \algg{alg:MultiCPP} with $N=5$
(see \rem{rem:MemoryOrder})\\ 
\verb#ValidateG.sh# & Validate constraint graph \\
\verb#PValidateECMC.sh# & Validate \algg{alg:LocalECMC} against \Lref  \\
\verb#CValidateECMC.sh# & Validate \algg{alg:MultiCPP} against \Lref \\
\verb#BenchmarkECMC.sh# & Benchmark \verb#MultiThreadECMC.cc#, generate
\fig{fig:EPH_benchmark} \\
\end{tabular} \\

In the \verb#Setup.sh# script, \verb#CellECMC.f90# first produces an 
equilibrated sample $\CCAL_{0}$. It then generates \Gthree with  
\verb#GenerateG3.py#, and runs \verb#PruneG.py# to output $\GCAL$. Finally, it 
runs \verb#GlobalTimeECMC.py# for each set $\ACAL_{0}$, in order to generate 
generate several \Lref. The \verb#ValidateG.sh# script runs cell-based ECMC, 
using \verb#GraphValidateCellECMC.cc#, and verifies that all liftings are 
accounted for in $\GCAL$. It also tracks the solicitation of arrows as a 
function of time. The \verb#PValidateECMC.sh# script validates 
\verb#SingleThreadLocalTimeECMC.py#  by comparing the sets of liftings with
\Lref from \verb#Setup.sh#. The \verb#CValidateECMC.sh# script does the 
same for \verb#MultiThreadECMC.cc# The \verb#BenchmarkECMC.sh# script benchmarks 
\verb#MultiThreadECMC.cc# for different numbers of threads. The test suites are 
concerned with small-$N$ configurations.  

\section{Conclusions and outlook}
\label{sec:ConclusionsOutlook}
In this paper, we presented an event-driven multithreaded ECMC algorithm for 
hard spheres which enforces thread synchronization at infrequent breakpoints 
only. Between breakpoints, spheres carry and update local times. Possible 
inconsistencies are locally detected through a horizon condition.  Within ECMC, 
our method avoids the scheduling problem that has historically plagued 
event-driven molecular dynamics. This is possible because in ECMC only few
spheres  move  at any moment, and all have the same velocity. 
Conflicts are thus exceptional, and little information 
is exchanged between threads. We relied the generalized birthday problem 
to 
show that our algorithm remains viable up to a number of threads that grows as 
the square root of the number of spheres, a setting relevant for the 
simulation of millions of spheres for modern commodity servers with $ \sim 100$ 
threads.  The mapping of \alg{alg:MultiSeqConsist} onto an absorbing Markov 
chain allowed us to prove its correctness (for a given lifted initial 
configuration), and to rigorously analyze side effects of code re-orderings in 
the multithreaded  C++ code.

Our algorithm is presently implemented between two global breakpoint times, 
where it achieves considerable speed-up with respect to sequential ECMC.  
Generalization for a full practical multithreaded ECMC code that greatly 
outperforms cell-based algorithms appear within reach. It is still a challenge 
to understand whether multithreaded ECMC applies to general interacting-particle 
systems.

\section*{Acknowledgements}

W.K. acknowledges support from the Alexander von Humboldt Foundation. We 
thank E.~P.~Bernard for allowing his original hard-sphere ECMC production code 
to be made available. 

\bibliographystyle{elsarticle-num}
\bibliography{General,Para_spheres}

\begin{thebibliography}{10}
\expandafter\ifx\csname url\endcsname\relax
  \def\url#1{\texttt{#1}}\fi
\expandafter\ifx\csname urlprefix\endcsname\relax\def\urlprefix{URL }\fi
\expandafter\ifx\csname href\endcsname\relax
  \def\href#1#2{#2} \def\path#1{#1}\fi

\bibitem{Bernard2009}
E.~P. Bernard, W.~Krauth, D.~B. Wilson, {Event-chain Monte Carlo algorithms for
  hard-sphere systems}, Phys. Rev. E 80 (2009) 056704.
\newblock \href {http://arxiv.org/abs/0903.2954} {\path{arXiv:0903.2954}},
  \href {http://dx.doi.org/10.1103/PhysRevE.80.056704}
  {\path{doi:10.1103/PhysRevE.80.056704}}.

\bibitem{Michel2014JCP}
M.~{Michel}, S.~C. {Kapfer}, W.~{Krauth}, {Generalized event-chain Monte Carlo:
  Constructing rejection-free global-balance algorithms from infinitesimal
  steps}, J. Chem. Phys. 140~(5) (2014) 054116.
\newblock \href {http://arxiv.org/abs/1309.7748} {\path{arXiv:1309.7748}},
  \href {http://dx.doi.org/10.1063/1.4863991} {\path{doi:10.1063/1.4863991}}.

\bibitem{Bernard2011}
E.~P. Bernard, W.~Krauth, {Two-Step Melting in Two Dimensions: First-Order
  Liquid-Hexatic Transition}, Phys. Rev. Lett. 107 (2011) 155704.
\newblock \href {http://arxiv.org/abs/1102.4094} {\path{arXiv:1102.4094}},
  \href {http://dx.doi.org/10.1103/PhysRevLett.107.155704}
  {\path{doi:10.1103/PhysRevLett.107.155704}}.

\bibitem{Kapfer2015PRL}
S.~C. Kapfer, W.~Krauth, {Two-Dimensional Melting: From Liquid-Hexatic
  Coexistence to Continuous Transitions}, Phys. Rev. Lett. 114 (2015) 035702.
\newblock \href {http://arxiv.org/abs/1406.7224} {\path{arXiv:1406.7224}},
  \href {http://dx.doi.org/10.1103/PhysRevLett.114.035702}
  {\path{doi:10.1103/PhysRevLett.114.035702}}.

\bibitem{Hasenbusch_2018}
M.~Hasenbusch, S.~Schaefer, {Testing the event-chain algorithm in
  asymptotically free models}, Phys. Rev. D 98 (2018) 054502.
\newblock \href {http://arxiv.org/abs/1806.11460} {\path{arXiv:1806.11460}},
  \href {http://dx.doi.org/10.1103/PhysRevD.98.054502}
  {\path{doi:10.1103/PhysRevD.98.054502}}.

\bibitem{Bernard2012b}
E.~P. Bernard, W.~Krauth, {Addendum to {\textquotedblleft}Event-chain Monte
  Carlo algorithms for hard-sphere systems{\textquotedblright}}, Physical
  Review E 86~(1) (2012) 017701.
\newblock \href {http://arxiv.org/abs/1111.6964} {\path{arXiv:1111.6964}},
  \href {http://dx.doi.org/10.1103/physreve.86.017701}
  {\path{doi:10.1103/physreve.86.017701}}.

\bibitem{KapferKrauth2017}
S.~C. Kapfer, W.~Krauth, {Irreversible Local Markov Chains with Rapid
  Convergence towards Equilibrium}, Phys. Rev. Lett. 119 (2017) 240603.
\newblock \href {http://arxiv.org/abs/1705.06689} {\path{arXiv:1705.06689}},
  \href {http://dx.doi.org/10.1103/PhysRevLett.119.240603}
  {\path{doi:10.1103/PhysRevLett.119.240603}}.

\bibitem{Faulkner2018}
M.~F. Faulkner, L.~Qin, A.~C. Maggs, W.~Krauth, {All-atom computations with
  irreversible Markov chains}, The Journal of Chemical Physics 149~(6) (2018)
  064113.
\newblock \href {http://arxiv.org/abs/1804.05795} {\path{arXiv:1804.05795}},
  \href {http://dx.doi.org/10.1063/1.5036638} {\path{doi:10.1063/1.5036638}}.

\bibitem{Harland2017}
J.~Harland, M.~Michel, T.~A. Kampmann, J.~Kierfeld, {Event-chain Monte Carlo
  algorithms for three- and many-particle interactions}, EPL (Europhysics
  Letters) 117~(3) (2017) 30001.
\newblock \href {http://arxiv.org/abs/1611.09098} {\path{arXiv:1611.09098}},
  \href {http://dx.doi.org/10.1209/0295-5075/117/30001}
  {\path{doi:10.1209/0295-5075/117/30001}}.

\bibitem{Metropolis1953}
N.~{Metropolis}, A.~W. {Rosenbluth}, M.~N. {Rosenbluth}, A.~H. {Teller},
  E.~{Teller}, {Equation of State Calculations by Fast Computing Machines}, J.
  Chem. Phys. 21 (1953) 1087--1092.
\newblock \href {http://dx.doi.org/10.1063/1.1699114}
  {\path{doi:10.1063/1.1699114}}.

\bibitem{Alder1957}
B.~J. {Alder}, T.~E. {Wainwright}, {Phase Transition for a Hard Sphere System},
  J. Chem. Phys. 27 (1957) 1208--1209.
\newblock \href {http://dx.doi.org/10.1063/1.1743957}
  {\path{doi:10.1063/1.1743957}}.

\bibitem{AlderWainwright1959}
B.~J. {Alder}, T.~E. {Wainwright}, {Studies in Molecular Dynamics. I. General
  Method}, J. Chem. Phys. 31 (1959) 459--466.
\newblock \href {http://dx.doi.org/10.1063/1.1730376}
  {\path{doi:10.1063/1.1730376}}.

\bibitem{Rapaport1980}
D.~C. {Rapaport}, {The Event Scheduling Problem in Molecular Dynamic
  Simulation}, Journal of Computational Physics 34 (1980) 184--201.
\newblock \href {http://dx.doi.org/10.1016/0021-9991(80)90104-7}
  {\path{doi:10.1016/0021-9991(80)90104-7}}.

\bibitem{Isobe2016}
M.~Isobe, Hard sphere simulation in statistical physics methodologies and
  applications, Molecular Simulation 42~(16) (2016) 1317--1329.
\newblock \href {http://dx.doi.org/10.1080/08927022.2016.1139106}
  {\path{doi:10.1080/08927022.2016.1139106}}.

\bibitem{Lubachevsky1992}
B.~D. Lubachevsky, Simulating billiards serially and in parallel, International
  Journal in Computer Simulation 2 (1992) 373--411.

\bibitem{Lubachevsky1993}
B.~Lubachevsky, Several unsolved problems in large-scale discrete event
  simulations, ACM SIGSIM Simulation Digest 23 (1993) 60--67.
\newblock \href {http://dx.doi.org/10.1145/174134.158467}
  {\path{doi:10.1145/174134.158467}}.

\bibitem{Greenberg1996}
A.~G. Greenberg, B.~D. Lubachevsky, I.~Mitrani, Superfast parallel discrete
  event simulations, ACM Transactions on Modeling and Computer Simulation
  (TOMACS) 6~(2) (1996) 107--136.

\bibitem{Krantz1996}
A.~T. Krantz, Analysis of an efficient algorithm for the hard-sphere problem,
  ACM Trans. Model. Comput. Simul. 6~(3) (1996) 185–209.
\newblock \href {http://dx.doi.org/10.1145/235025.235030}
  {\path{doi:10.1145/235025.235030}}.

\bibitem{Marin1997}
M.~{Marin}, Billiards and related systems on the bulk-synchronous parallel
  model, in: Proceedings 11th Workshop on Parallel and Distributed Simulation,
  1997, pp. 164--171.
\newblock \href {http://dx.doi.org/10.1109/PADS.1997.594602}
  {\path{doi:10.1109/PADS.1997.594602}}.

\bibitem{Miller2004}
S.~Miller, S.~Luding, Event-driven molecular dynamics in parallel, Journal of
  Computational Physics 193~(1) (2004) 306 -- 316.
\newblock \href {http://arxiv.org/abs/physics/0302002}
  {\path{arXiv:physics/0302002}}, \href
  {http://dx.doi.org/10.1016/j.jcp.2003.08.009}
  {\path{doi:10.1016/j.jcp.2003.08.009}}.

\bibitem{Diaconis2000}
P.~Diaconis, S.~Holmes, R.~M. Neal, {Analysis of a nonreversible Markov chain
  sampler}, Annals of Applied Probability 10 (2000) 726--752.

\bibitem{Levin2008}
D.~A. Levin, Y.~Peres, E.~L. Wilmer, {Markov Chains and Mixing Times}, American
  Mathematical Society, 2008.

\bibitem{Engel2013}
M.~Engel, J.~A. Anderson, S.~C. Glotzer, M.~Isobe, E.~P. Bernard, W.~Krauth,
  {Hard-disk equation of state: First-order liquid-hexatic transition in two
  dimensions with three simulation methods}, Phys. Rev. E 87 (2013) 042134.
\newblock \href {http://arxiv.org/abs/1211.1645} {\path{arXiv:1211.1645}},
  \href {http://dx.doi.org/10.1103/PhysRevE.87.042134}
  {\path{doi:10.1103/PhysRevE.87.042134}}.

\bibitem{Lamport1979}
Lamport, {How to Make a Multiprocessor Computer That Correctly Executes
  Multiprocess Programs}, {IEEE} Transactions on Computers C-28~(9) (1979)
  690--691.
\newblock \href {http://dx.doi.org/10.1109/tc.1979.1675439}
  {\path{doi:10.1109/tc.1979.1675439}}.

\bibitem{atomic}
H.~J. Boehm, L.~Crowl, C++ atomic types and operations,
  \url{http://www.open-std.org/jtc1/sc22/wg21/docs/papers/2007/n2427.html}
  (2009).

\bibitem{KapferPolytope2013}
S.~C. Kapfer, W.~Krauth, {Sampling from a polytope and hard-disk Monte Carlo},
  Journal of Physics: Conference Series 454~(1) (2013) 012031.
\newblock \href {http://arxiv.org/abs/1301.4901} {\path{arXiv:1301.4901}},
  \href {http://dx.doi.org/10.1088/1742-6596/454/1/012031}
  {\path{doi:10.1088/1742-6596/454/1/012031}}.

\bibitem{Fukuda2016}
K.~Fukuda, B.~G\"{a}rtner, M.~Szedl{\'{a}}k, {Combinatorial redundancy
  detection}, Annals of Operations Research 265~(1) (2016) 47--65.
\newblock \href {http://dx.doi.org/10.1007/s10479-016-2385-z}
  {\path{doi:10.1007/s10479-016-2385-z}}.

\bibitem{Mathis1991}
F.~H. Mathis, A generalized birthday problem, SIAM Review 33~(2) (1991)
  265--270.
\newblock \href {http://dx.doi.org/10.1137/1033051}
  {\path{doi:10.1137/1033051}}.

\end{thebibliography}
     
\end{document}